\newtheorem{proposition}{Proposition}
\newtheorem{lemma}{Lemma}
\newtheorem{corollary}{Corollary}
\newtheorem{remark}{Remark}
\begin{document}
%

\title{Modeling and Performance Analysis  in Cache-enabled  Millimeter Wave HetNets with Access and Backhaul Integration}
%

\author{Hao~Wu,
        Chenwu~Zhang,
        Hancheng~Lu,~\IEEEmembership{Member,~IEEE},
        Qi~Hu
\thanks{Hao Wu,  Chenwu~Zhang and  Hancheng Lu are  with CAS Key Laboratory of Wireless Optical Communication, University of Science of China, Hefei 230027, China. (Email: hwu2014@mail.ustc.edu.cn,  cwzhang@mail.ustc.edu.cn, hclu@ustc.edu.cn).

Qi~Hu is with University of Science and Technology of China, Hefei 230027, China. (Email:hq1998@mail.ustc.edu.cn ).

}
}

\maketitle

\begin{IEEEkeywords}
Millimeter Wave;Heterogeneous Networks; Caching;  Area spectral Efficiency
\end{IEEEkeywords}

%
\IEEEpeerreviewmaketitle

\section{INTRODUCTION}
Recently,   mmWave-based access and backhaul integration heterogeneous cellular networks (mABHetNets) has been envisioned in 5G dense cellular networks to satisfy the rapidly growing traffic demand \cite{IAB1,IAB2,IAB3,3GPPIAB}. In the mABHetNets, high-power mmWave MBSs are overlaid by denser  lower-power mmWave SBSs where   MBSs and SBSs provide high  rate service to the users by wireless access link while the MBS   maintains the backhaul capacity of the SBSs by the wireless backhaul link. Both the access link and the backhaul link   share  the same mmWave spectrum resources, which is called a mmWave-based access and backhaul integration architecture. 
Average potential throughput (APT) and area spectral efficiency (ASE) have become the two major performance metrics for 5G dense cellular networks\cite{Throuhgput1,Throuhgput2,Throuhgput3,ASE0,ASE1,ASE2}. APT focuses on analyzing   user's average throughput with the specific rate requirement \cite{Throuhgput1}. \cite{Throuhgput2} analyze  APT of  different user's SINR requirements in a new  path loss model. Further, \cite{Throuhgput3} investigates APT of user    in line of-sight (LOS) and non-line-of-sight (NLOS) scenarios respectively.  Besides, another widely used metric ASE is defined as  the spectral efficiency  per unit area of cellular networks\cite{ASE0}.   \cite{ASE1} discussed the ASE in the different SBS antenna gain patterns. \cite{ASE2} analyzes ASE under different user association strategies in D2D millimeter-wave networks.

In mABHetNets, mmWave spectrum bandwidth partition between access link and backhaul link has played an important role in APT and ASE\cite{Partition1,Partition2,Partition3}. Since partitioned spectrum bandwidth between the access and backhaul is orthogonal, the interference between the access and backhaul   is avoided and the wireless   rate is improved. As the user's dare rate is influenced by both access link rate and backhaul link rate, \cite{Partition1}  explores the optimal partition of access and backhaul bandwidth to maximize the rate coverage. In a mmWave unified access and backhaul  network, \cite{Partition2} leverages allocated resource ratio between radio access and backhaul to study maximization of network capacity by considering the fairness among SBSs. \cite{Partition3} jointly studies the beamforming and bandwidth partition to improve the network capacity of mABHetNets. However, since the backhaul link may suffer relatively high  path loss, a large amount of mmWave spectrum is     occupied by the backhaul link to maintain the backhaul link capacity. According to the findings in \cite{Partition3}, up to 50\% mmWave spectrum will be used in backhaul link to satisfy the high speed data traffic.  Such stubborn ``\emph{spectrum occupation}'' in mABHetNets has  restricted network performance such as ASE and APT to achieve a better possible improvement.

Nowadays, enabling caching at the wireless edge such as SBSs called cache-enabled mABHetNets has been considered as a promising way to improve the network performance\cite{2u,caching,caching1,MostPop}. Statistical reports have shown that a few popular files requested by many users account for most of the backhaul traffic load \cite{2u,caching}. Based on this fact, equipping caches at all BSs for caching the most popular contents becomes an effective method to offload the data traffic of backhaul \cite{MostPop}. In detail, popular files can be proactively cached at SBSs during off-peak time, and delivered to users when requested, which can significantly alleviate backhaul traffic pressure. By exploiting these benefits, equipping caches in the mABHetNets brings an opportunity to overcome \emph{spectrum occupation} problem. When the backhaul traffic is offloaded by caching popular files at the cache of SBSs, the part of mmWave spectrum in backhaul can be transferred to the access link. With more cache capacity, more spectrum is used for the access link, both   ASE and APT can be increased.

However, as we know that, additional caches will consume the  power of   BSs\cite{BSPowerModel,BSPowerModel1}. Both  \cite{BSPowerModel1} and \cite{BSPowerModel}  think the caching energy consumption  is the key part of the total power  consumption. Since   BSs has a limited energy resource, when   the caching power consumption is introduced,   the  transmission power is recduced, which will lower the data rate.  Therefore, in this paper, we attempt to explore the impact of caches on the network performance of cache-enabled mABHetNets, especially in terms of APT and ASE. We also  identify the impacts of  some other cache-related factors. We first derive the basic  SINR distriution of the mABHetNets. Then, we further study the APT and ASE in the cache-enabled mABHetNets. To the best of our knowledge, there is no theoretical research that investigates the performance of the mABHetNets when caches are involved. Motivated by such fact, we carry out the analytical study for the cache-enabled  mABHetNets.  The major contributions of this paper are summarized as follows.
\begin{itemize}
\item We develop a tractable analytical framework by stochastic geometry tool to study the cache-enabled mABHetNets. Considering the LoS and NLoS transmission in mmWave, we   derive the expression of  APT and ASE where the key factors (e.g. cache capacity, bandwidth partition ratio etc. ) are identified.
\item With the derived performance expressions, we analyze APT and ASE over  the cache capacity and bandwidth partition. In order to make the ASE expressions more tractable, we derive ASE in the special case(e.g., the noise-limited and the interference-limited case).
\item We find that there exists case the optimal cache capacity. With the  small cache capacity,  and the corresponding APT and ASE are improved. However, when the cache capacity is larger, both APT and ASE will be decreased. With the optimal cache capacity, the optimal usage ratio of  mmWave spectrum  for access   can be improved from 0.5 to 0.8. Besides,  we also  see  that, some other caching parameters  have a significant influence on the network performance.
\end{itemize}
Finally, extensive numerical and simulations results are carried out to validate the motivation and effectiveness of our work.

The rest of the paper is organized as follows: Sect. II gives an overview of the system model. The SINR distribution of cache-enabled mABHetNets   are derived  in Sec. III. Then, APT and ASE are further analyzed in Sec. IV, respectively. Last, the numerical  results are presented in Section V. Finally, we conclude the paper in Section VI.

\section{SYSTEM MODEL}
In this section, we consider a downlink mABHetNets consisting  MBS tier and   SBS tier. By the  stochastic geometry tool, the location of the BSs and users are described. Then,  file caching model and the SINR model are illustrated. Last, the bandwidth partition between the access and backhaul link is introduced.

%

\begin{figure}[htbp]
  \centering
  \includegraphics[width=4.5in]{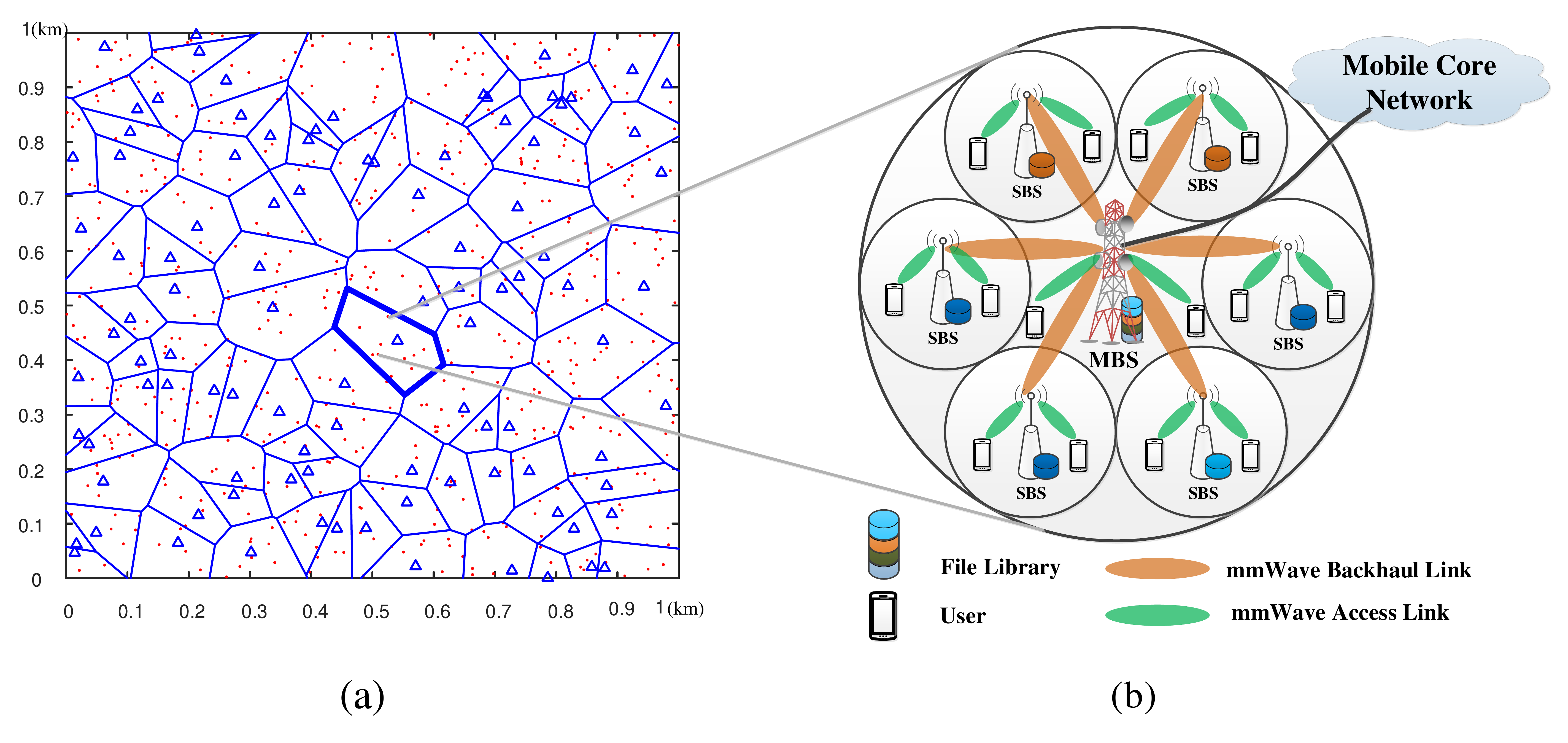}
  \captionsetup{font={small}}
  \caption{(a) Example of downlink mABHetNets with two tiers of BSs:   MBSs (blue triangle) are overlaid with denser SBSs (red point) (b) a megascopic cache-enabled mABHetNets}
  \label{example}
\end{figure}

\subsection{Network Model}
 In this paper, with the aid of stochastic geometry approach, a tractable analytical network model is proposed for characterizing the performance of our mABHetNets.  As it is shown in Fig. \ref{example}, downlink communication in a two-tier mABHetNets is considered. The first tier of mABHetNets consist of lower-power SBSs while the second tier consists of higher-power MBSs. The MBSs is connected to the mobile core network by high-capacity  optical fiber links. Besides, the MBSs provide wireless backhaul connections to the SBSs via   broad mmWave spectrum. The locations of the MBSs  and SBSs are assumed to follow independent Poisson point processes (PPPs), which are denoted by $\Phi_m\in \mathbb{R}_2$ and $\Phi_s\in \mathbb{R}_2$  with densities and $\lambda_m$ and $\lambda_s$, respectively. Both MBSs and SBSs provide the access service to the users, which is also modeled according to independent homogeneous PPPs $\Phi\in \mathbb{R}_2$ with density $\lambda$. $\lambda$
is assumed to be sufficiently larger than $\lambda_m$ and $\lambda_s$ so that each BS has
at least one associated UE in its coverage area. The access link is also using mmWave spectrum. The above
model where the nodes are distributed using PPPs has been shown to be quite effective for system-level performance evaluation of cellular networks \cite{PPP}.

\begin{table}
\centering\small
\caption{Main Symbols}
\label{Symbols}
\renewcommand\arraystretch{0.8}
\begin{tabular}{cp{12cm}}
\toprule
\textbf{Symbol}                     & \textbf{Meaning}                                       \\
\midrule
$W,W_{ac},W_{bh}$                         &Total spectrum bandwidth, access link bandwidth and backhaul link bandwidth\\
$\eta$                              &mmWave bandwidth partition ratio for access link\\
$\alpha_\mathrm{L},\alpha_{\mathrm{NL}}$                            & Path loss exponet in LoS and NLoS transmission\\
$\lambda_s,\lambda_m,\lambda$                             &Density of MBS, SBS and user, respectively\\
$C$                                 &Cache capacity of a SBS\\
$F$                                 &Number of files \\
$p_h$                               &Cache hit ratio of a SBS\\
$P_{m}^{tot},P_{s}^{tot}$           &Total power of an MBS and an SBS, respectively\\
$P_{m}^{tr},P_{s}^{tr}$             &Transmission power of an MBS and an SBS, respectively\\
$B_m,B_s$                           &The association bias factor of MBS and SBS\\
$w_{ca}$                            &Caching power consumption coefficient\\
\bottomrule
\end{tabular}
\end{table}
\subsection{Caching Model}
The file library is denoted by $\mathcal{F}$ and there are $|\mathcal{F}|= F$ files in the library. It is assumed that each file has the equal size \cite{Samesize}
for mathematical tractability and notational simplicity.  Note that when the file sizes are different from each other, such assumption can still hold  since those files can be divided into chunks of equal size. Different files have different popularities. The file popularity distribution changes with time slowly so that it can be regarded as static \cite{StableDistribution}. Then, based on  collected statistical information and  the  machine learning algorithm, the file popularity can often be predicted \cite{Popularity}.  Zipf distribution  is widely used to model the popularity of file $f$, $\forall f \in\{1,...,F\}$: the probability of  the $f$-th file is
$
  p_{f}=\frac{f^{-\gamma_p}}{\sum_{g=1}^{F}g^{-\gamma_p}},
$
where $\gamma_p$ is the Zipf exponent reflecting different levels of skewness of the distribution\cite{Zipf,Zipf1}. The typical value of $\gamma_p$ is between 0.5 and 1.0, where higher value causes more ``peakiness'' of the distribution \cite{Zipf1}.

For the cache-enabled MBS and SBS, the cache capacity of the SBS and MBS are denoted by $F$ and $C$ (file units), respectively. As each MBS is often equipped with  large  cache capacity, we assume that the MBS can cache the total file library \cite{MBSFileLibrary}. In this paper, the most popular caching strategy is applied for each BS : each BS caches the
most popular contents until its storage is full\cite{MostPop}. Hence, the cache hit ratio of a SBS can be calculated as
\begin{equation}\label{HitRatio}
  p_{h}=\frac{\sum_{f=1}^{C}f^{-\gamma_p}}{\sum_{g=1}^{F}g^{-\gamma_p}}.
\end{equation}

\subsection{Wireless Transmission Model}
The mmWave based  wireless   link   can be either line of sight (LOS) or non-line of sight (NLoS) transmission. Then, we consider the following path loss function defined in  \cite{LOSModel}:
\begin{align}\label{LOSModel}
 \operatorname{L}(r)
 =\left\{
 \begin{array}{ll}
 {A_{\mathrm{L}} r^{-\alpha_{\mathrm{L}}},}      &{\text{with LOS probability $\mathcal{P}_\mathrm{L}(r)$}} \\
 {A_{\mathrm{NL}} r^{-\alpha_{\mathrm{NL}}},}    &{\text{with NLoS probability $\mathcal{P}_\mathrm{NL}(r)=1-\mathcal{P}_\mathrm{L}(r)$}},
 \end{array}
 \right.
\end{align}
together with the   LoS probability  $\mathcal{P}_{\mathrm{L}}(r)=\min\left(\frac{18}{r},1\right)(1-e^{-\beta r})+e^{-\beta r}$
and $r$ is the transmission distance.   $\beta \geq 0$ is the parameter that captures
density and size of obstacles between the transmitter and the reveicer. As $\beta$ increases, the size and density of obstacles increase, which results in lower probability of LoS transmission. The propagation is always in LOS condition when $ r \leq 18 m$. In practice, this implies that, for denser mABHetNets, the probability of LOS coverage is very close to one and  some NLOS  transmission  could be neglected.

By involving the cache power consumption,   we give the power model   of  one MBS and one SBS as follows\cite{BSPowerModel}:$ \rho_m P_{m}^{tr}+P_{m}^{fc}+P_{m}^{ca}=\rho_{m} P_{m}^{\mathrm{tr}}+P_{m}^{\mathrm{fc}}+w_{\mathrm{ca}}F,     \rho_s P_{s}^{tr}+P_{s}^{fc}+P_{s}^{ca}=\rho_s P_{s}^{tr}+P_{\mathrm{s}}^{\mathrm{fc}}+w_{\mathrm{ca}}C$
where $P_{m}^{tr}$ and $P_{s}^{tr}$ denote transmit powers consumed at a MBS and a SBS, respectively. $\rho_m(\rho_s)$ reflects the impact of power amplifier and cooling on transmit power. $P_{m}^{fc}(P_{s}^{\mathrm{fc}})$ is the fixed circuits-related power consumption at a BS. $P_{m}^{ca}(P_{s}^{ca})$ is the power consumption for caching files at a BS. Usually, $P_{m}^{fc}(P_{s}^{fc})$ is assumed to be a fixed  power consumption constant\cite{CircuitPowerModel}. To quantify  power consumption for caching, we adopt a power-proportional model, which is widely used in cache enabled radio access network  \cite{BSPowerModel,BSPowerModel1}. In the power-proportional model, the caching
power consumption is proportional to the cache capacity. Then, the caching power  consumption of the MBS and SBS are given as follows: $P_{m}^{ca}=w_{ca}F,~P_{s}^{ca}=w_{ca}C$, where $\omega_{ca}$ is the power coefficient of cache hardware in watt/bit. $F$ is the total size of all the files. Here, we assume that the cache storage of the MBS is large enough and the library of all files are cached in MBS \cite{MBSFileLibrary}.  $C$ is the cache storage of a SBS in file unit.

For each SBS, since the total power consumption $P_{s}^{tot}$ is usually given, we can get $P_s^{tr}=\frac{P_{s}^{tot}-P_{\mathrm{s}}^{\mathrm{fc}}-w_{\mathrm{ca}}C}{\rho_s}$ $
=P'_s-{w'}_{\mathrm{ca}}^{s}C$, where $P'_s=\frac{P_{s}^{tot}-P_{\mathrm{s}}^{\mathrm{fc}}}{\rho_s}$ and ${w'}_{ca}^s=\frac{w_{ca}}{\rho_s}$. Here, we assume that the MBS has a large cache capacity and contains all the files in  the file library. The total power consumption of a MBS is $P_{m}^{tot}$. Then the $P_m^{tr}=\frac{P_{m}^{tot}-P_{\mathrm{m}}^{\mathrm{fc}}-w_{\mathrm{ca}}C}{\rho_m}
=P'_m-{w'}_{\mathrm{ca}}^{m}F$ is fixed where $P'_m=\frac{P_{m}^{tot}-P_{\mathrm{m}}^{\mathrm{fc}}}{\rho_m}$ and ${w'}_{ca}^m=\frac{w_{ca}}{\rho_m}$. Note that considering the total  power  consumption,  the maximum cache capacity  is  $\frac{P_{s}^{tot}}{w_{\mathrm{ca}}}$.

The analysis in this paper is done for the user located at the origin referred to as the typical user. Therefore, the signal-to-interference-plus-noise ratio (SINR) of a typical user at a random distance $r$ from its associated SBS and MBS   are
\begin{small}
\begin{align}\label{SBSUSERSINR}
\operatorname{SINR}_{s}(r)
&=\frac{P_{s}^{tr} B_{s}h_{s  } L(r)}{I_s+I_m+N_{0}}
 =\frac{(P'_s-w'_{\mathrm{ca}}C) B_{s}h_{s  } L(r_{s })}
   {\sum\limits_{i \in \Phi_{s} \backslash b_{s,0}}(P'_s-w'_{\mathrm{ca}}C) B_{s } h_{s, i}L(r_{s, i })
   +\sum\limits_{l \in \Phi_{m} } P_{m}^{tr}  B_{m}h_{m, l }L(r_{m, l })+N_{0}}
\end{align}
\begin{align}\label{MBSUSERSINR}
\operatorname{SINR}_{m}(r)
&=\frac{P_{m}^{tr} B_{m}h_m L(r_{m} )}
{I'_s+I'_m+N_{0}}=\frac{P_{m}^{tr} B_{m}h_m L(r_{m} )}
{\sum\limits_{i \in \Phi_{s}}  (P'_s-w'_{\mathrm{ca}}C)B_{s} h_{s, i} L(r_{s, i})+\sum\limits_{l \in \Phi_{m} \backslash b_{m,0}} P_{m}^{tr} B_{m} h_{m, l} L(r_{m, l})+N_{0}}
\end{align}
\end{small}
where   $B_s,B_m$ are the association bias factor of SBS and MBS.   $h_{s},h_{m }$ are the small-scale fadings from SBS and MBS. $L(r_{m}),L(r_{s})$ are the path losses   from the serving SBS or MBS to the typical user. $r_{s}$ ($r_{m}$) is the distance between the assocation SBS $b_{s,0}$  (assocation MBS $b_{m,0}$) and the typical user.  $ r_{s, i}$($ r_{m,l}$) is the distance between the $i$-th SBS($l$-th MBS )and the typical user.    $N_{0}$ is   the additive white Gaussian noise component.

To backhaul the data traffic of the SBSs, the MBS provides the wireless backhaul link. For a typical SBS that a random distance $r_{bh}$ from its associated MBS,  the SINR of the signal from the MBS to the SBS in downlink backhaul is then given as,
\begin{align}\label{M-SBSSINR}
\operatorname{SINR}_{bh} (r_{bh})
&=\frac{P_{m}^{tr} B_m h_{m } L(r_{bh})}
{ I_{bh}+  N_0}=\frac{P_{m}^{tr}B_m h_{m } L(r_{bh})}
{ \sum_{i \in \Phi_{m} \backslash b_{m,0 }} P_{m}^{tr} B_m  h_{m,i}L(r_{bh,i})+  N_0}
\end{align}

\subsection{Bandwidth Partition Model}
\begin{figure}[htbp]
  \centering
  \includegraphics[width=4.5in]{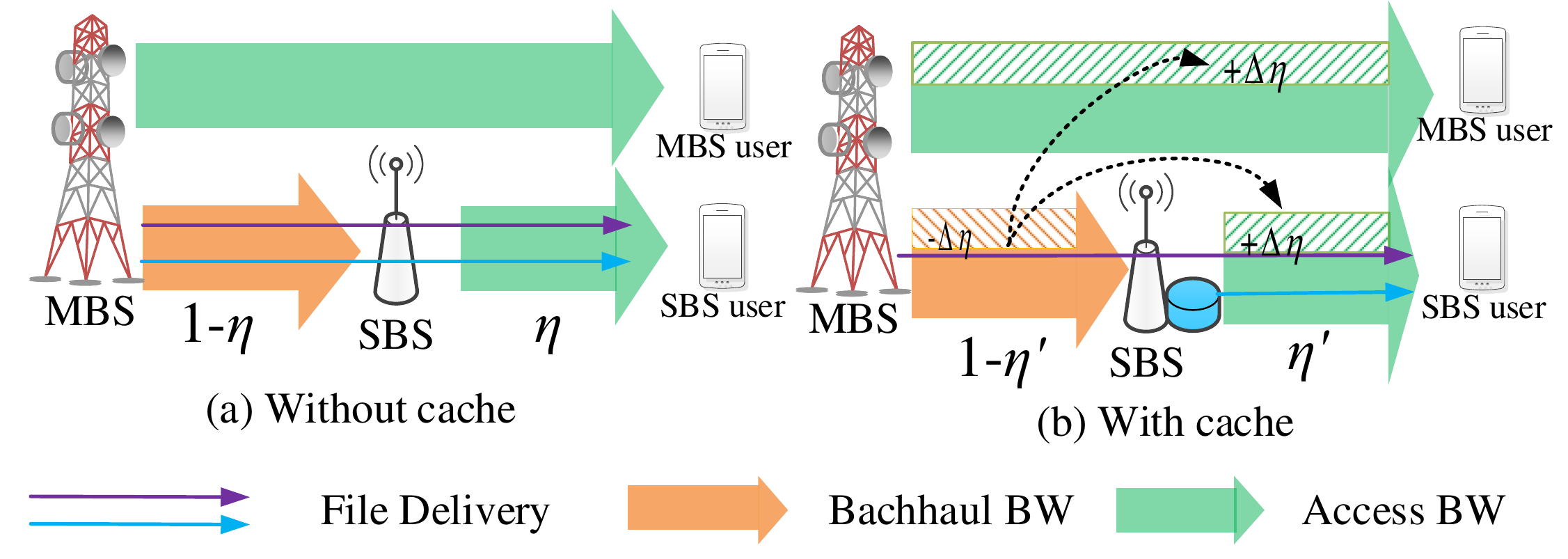}
  \captionsetup{font={small}}
  \caption{Bandwidth partition between access and backhaul (a) original and (b)with cache}
  \label{schematic}
\end{figure}
In this paper, we focus on the downlink data transmission using the mmWave spectrum. In Fig.\ref{schematic}, both the access link and backhaul link use the mmWave spectrum. To avoid the interference between the access and backhaul  link, the access and backhaul use orthogonal spectrum resources. The total mmWave bandwidth $W$ for downlink transmission is partitioned into two parts: $W_{ac}=\eta W$  for access and $W_{bh}=(1-\eta )W$ for backhaul.  $\eta \in [0, 1]$  is access-backhaul bandwidth partition ratio and denotes the    part of spectrum for access link. The file delivery is both related with the access link and backhaul link. Then  by changing the   partition ratio,  both the access rate and backhaul rate in the same transmission path can maintain an effective  transmission. When the cache is introduced, cached files can directly delivered from SBS and the backhaul traffic can be saved. Thus, part mmWave spectrum resource in backhaul link can be shifted to access link and the data rate is improved.

\section{SINR Distribution   of mABHetNets}

In this section, we derive the expression of SINR distribution of the typical user conditioned on its association selections  and later decondition over them. As users may be covered either by the SBS or MBS,  we first derive the PDF of the distance   between the user and  the serving SBS and MBS. Further,  SINR distributions of the users associated with the serving SBS  and the serving MBS are obtained. Besides, the SINR coverage probability of the SBS covered by the MBS is also obtained.

\subsection{The PDF of Distance to Nearest Base Station}
First of all, we need to derive the probability distribution function (PDF) of the distance between the typical  user and its nearest BS. We focus on the  typical user at the origin.  When the typical user communicates with the closest BS at a distance $r$, no other BS can be closer than $r$. In other words, all interfering BSs must be farther than $r$. Since the typical user is associated with the closest BS via either LoS or NLoS channel, we derive these PDFs in following Lemma, respectively
\begin{lemma}\label{LemmaUserClosestSBSProbabilty}
The PDF of $r$ (the distance between the typical user and the  nearest SBS or MBS via a LoS/NLoS path) is written as
    \begin{align}\label{LOSUserClosestSBSProbabilty}
        f_{R_{k}}^{\mathrm{L}}(r)     &=\mathcal{P}_L(r) \times \exp \left(-\pi r^{2} \lambda_k\right) \times 2 \pi r \lambda_k, \\
        f_{R_{k}}^{\mathrm{N L}}(r)   &= \mathcal{P}_\mathrm{NL}(r) \times \exp \left(-\pi r^{2} \lambda_k\right) \times 2 \pi r \lambda_k,
    \end{align}
where $k\in \{s,m\}$ denotes the index of SBS tier or MBS tier.
\end{lemma}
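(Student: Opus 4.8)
The plan is to assemble the two claimed densities from two independent ingredients: the nearest-neighbor (contact) distance law of the homogeneous PPP $\Phi_k$, and the independent LoS/NLoS marking induced by the probability $\mathcal{P}_L(r)$ from \eqref{LOSModel}. First I would pin down the contact-distance distribution. Fixing the typical user at the origin, the event that the nearest BS of tier $k$ lies beyond radius $r$ is exactly the event that the disk $b(0,r)$, of area $\pi r^{2}$, contains no point of $\Phi_k$. By the void (empty-space) probability of a homogeneous PPP of density $\lambda_k$, the number of points in a region of area $\pi r^2$ is Poisson with mean $\pi\lambda_k r^2$, so this probability equals $\exp(-\pi\lambda_k r^{2})$. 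Hence the CDF of the contact distance $R_k$ is $F_{R_k}(r)=1-\exp(-\pi\lambda_k r^{2})$, and differentiating gives the standard density
\begin{equation*}
f_{R_k}(r)=\frac{d}{dr}\,F_{R_k}(r)=2\pi\lambda_k r\,\exp(-\pi\lambda_k r^{2}).
\end{equation*}

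Next I would apply the LoS/NLoS split. By the model in \eqref{LOSModel}, a BS at distance $r$ is, independently of the point pattern, in LoS with probability $\mathcal{P}_L(r)$ and in NLoS with probability $\mathcal{P}_{\mathrm{NL}}(r)=1-\mathcal{P}_L(r)$. Conditioning on the nearest BS being at distance $r$ and then weighting by the probability that this BS carries the desired link type factorizes the joint density as a product, yielding
\begin{align*}
f_{R_k}^{\mathrm{L}}(r)&=\mathcal{P}_L(r)\,f_{R_k}(r)=\mathcal{P}_L(r)\,\exp(-\pi\lambda_k r^{2})\,2\pi\lambda_k r,\\
f_{R_k}^{\mathrm{NL}}(r)&=\mathcal{P}_{\mathrm{NL}}(r)\,f_{R_k}(r)=\mathcal{P}_{\mathrm{NL}}(r)\,\exp(-\pi\lambda_k r^{2})\,2\pi\lambda_k r,
\end{align*}
which are precisely the stated expressions. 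Since the argument only uses the intensity of the underlying process, it repeats verbatim for $k\in\{s,m\}$ with $\lambda_s$ or $\lambda_m$ in place of $\lambda_k$.

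The step that I expect to require the most care is the interpretation rather than the algebra. The exponential factor appearing here is the void probability of the \emph{full} process $\Phi_k$, namely $\exp(-\pi\lambda_k r^{2})$, and not that of the LoS-thinned subprocess. Accordingly, $f_{R_k}^{\mathrm{L}}$ must be read as the joint density of the event ``the geometrically nearest BS of tier $k$ sits at distance $r$ and that BS is LoS,'' under the modeling convention that the user is served by the single nearest BS whose link type is then revealed by the marking $\mathcal{P}_L(\cdot)$. I would make this convention explicit, so that these densities are not conflated with the (genuinely different) distribution of the nearest LoS BS taken over the thinned process, whose void term would instead involve $\int_0^r 2\pi\lambda_k t\,\mathcal{P}_L(t)\,dt$ in the exponent.
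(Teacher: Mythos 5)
Your proof is correct and takes essentially the same route as the paper's: both factor the joint event into the nearest-BS contact-distance density $2\pi\lambda_k r\,e^{-\pi\lambda_k r^2}$ (which the paper simply cites from its reference on PPP coverage analysis, while you derive it from the void probability) multiplied by the independent LoS/NLoS marking probability $\mathcal{P}_L(r)$ or $\mathcal{P}_{\mathrm{NL}}(r)$. Your closing caveat --- that these are joint densities for ``nearest BS at distance $r$ and that BS is LoS/NLoS,'' not the contact distance of the LoS-thinned subprocess --- is a correct and useful clarification that the paper leaves implicit.
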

\begin{proof}
The detailed proof procedure can be found in Appendix \ref{Appe1:NearDistance}.
\end{proof}

\begin{remark}
  The uncached file data will be delivered from the MBS to the SBS by the mmWave based wireless backhaul. Similar to the above analysis in Lemma \ref{LemmaUserClosestSBSProbabilty}, the PDFs of distance $r$ (between  the SBS  and the  associated the nearest MBS via a LoS/NLoS path)  are
\begin{align}\label{PDFSBSMBSLOS}
  f_{R_{bh}}^{\mathrm{L}}(r)&=\mathcal{P}_L(r) \times  \exp \left(-\pi r^{2} \lambda_m\right) \times 2 \pi r \lambda_m \\
  f_{R_{bh}}^{\mathrm{NL}}(r)&=\mathcal{P}_{\mathrm{NL}}(r) \times  \exp \left(-\pi r^{2} \lambda_m\right) \times 2 \pi r \lambda_m
\end{align}
\end{remark}

\subsection{Association Probability}
In mABHetNets, due to the different  transmission path (i.e., LoS and NLoS) and the densities of the SBS   and MBS, we need to analyze the probability that a user  is associated with SBS tier or with MBS tier.  Besides, since in the SBS backhaul association, SBS may be associated with MBS via different  transmission paths, different SBS backhaul association probabilities should be derived.
\subsubsection{ user association probability}
We consider a user association based on maximum biased received power, where a mobile user is associated with the strongest BS in terms of  the received power at the user. Then, for   a typical user associated with the SBS tier via LoS path and NLoS path, the received powers  are $P_{s}^{tr}  B_sh_{s}  A_\mathrm{L} r^{-\alpha_\mathrm{L}}$ and $P_{s}^{tr}  B_sh_{s}  A_\mathrm{NL} r^{-\alpha_\mathrm{NL}}$, respectively. For   a typical user associated with the MBS tier via LoS path and NLoS path, the received powers  are $P_{m}^{tr}  B_mh_{m}  A_\mathrm{L} r^{-\alpha_\mathrm{L}}$ and $P_{m}^{tr}  B_mh_{m}  A_\mathrm{NL} r^{-\alpha_\mathrm{NL}}$, respectively.

Based on the maximum biased received power association strategy, each tier’s BS density and transmit power as well as transmission path determine the probability that a typical user is associated with a tier. The following lemma provides the per-tier association probability via LoS and NLoS path respectively.

\begin{lemma}\label{LemmaUserAssociatedSBS}
   For the given distance $r$, the probabilities that a typical user is associated with the SBS tier by LoS link and NLoS link are
\begin{align}\label{UserAssociatedSBS}
    F_{s}^{\mathrm{L}}(r) &=p_{ln}^{ss}(r)p_{ll}^{sm}(r)p_{ln}^{sm}(r)f_{R_{s}}^{\mathrm{L}}(r),\\
    F_{s}^{\mathrm{NL}}(r)&=p_{nl}^{ss}(r)p_{nl}^{sm}(r)p_{nn}^{sm}(r)f_{R_{s}}^{\mathrm{NL}}(r)
\end{align}
Then, the probabilities that a typical user is associated with the MBS tier by LoS link and NLoS link are
\begin{align}\label{UserAssociatedMBS}
  F_{m}^{\mathrm{L}}(r)&=p_{ln}^{mm}(r)p_{ll}^{ms}(r)p_{ln}^{ms}(r)f_{R_{m}}^{\mathrm{L}}(r),\\
  F_{m}^{\mathrm{NL}}(r)&=p_{nl}^{mm}(r)p_{nl}^{ms}(r)p_{nn}^{ms}(r)f_{R_{m}}^{\mathrm{NL}}(r)
\end{align}
where  $p_{ln}^{ss}(r)$ denotes the probability of the event that the user  obtains the desired LoS signal from SBS tier and the NLoS  interference from the SBS tier. The other probabilities have the similar definitions and can be found in the proof.
\end{lemma}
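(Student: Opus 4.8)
The plan is to exploit the independent thinning of each tier's point process into LoS and NLoS components and then write each association event as a maximum-biased-received-power comparison that factorizes over these independent sub-processes. First I would split $\Phi_s$ and $\Phi_m$ into four mutually independent inhomogeneous PPPs --- LoS-SBS, NLoS-SBS, LoS-MBS, NLoS-MBS --- by retaining each point at distance $r$ with probability $\mathcal{P}_L(r)$ or $\mathcal{P}_\mathrm{NL}(r)$ according to the marking in \eqref{LOSModel}. By the marking/thinning theorem these four processes are independent, and this independence is exactly what will later turn the joint association event into the stated product.

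Next I would fix the association rule. Treating the association on the long-term averaged received power (so that $h_s,h_m$ enter only through their mean and cancel in the comparison, which is why no fading appears in the $p^{xy}_{ab}(r)$ factors), the typical user attaches to a LoS-SBS located at distance $r$ if and only if this serving link dominates the strongest candidate in each of the other three sub-processes. Writing the power balance $P_s^{tr}B_s A_\mathrm{L} r^{-\alpha_\mathrm{L}} \ge P_k^{tr}B_k A_p (r')^{-\alpha_p}$ against a competitor of type $(k,p)$ and inverting it gives a distance threshold $r'_{\min}(r)=\bigl(P_k^{tr}B_k A_p r^{\alpha_\mathrm{L}}/(P_s^{tr}B_s A_\mathrm{L})\bigr)^{1/\alpha_p}$ below which no competing BS of that (tier, path) type may lie.

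Then each factor is computed as a void probability of the corresponding inhomogeneous sub-process: the probability that the nearest NLoS-SBS lies beyond its threshold is $p_{ln}^{ss}(r)=\exp\bigl(-\int_0^{r'_{\min}} 2\pi u\,\lambda_s\,\mathcal{P}_\mathrm{NL}(u)\,du\bigr)$, and analogously $p_{ll}^{sm}(r)$ and $p_{ln}^{sm}(r)$ are the survival probabilities of the LoS-MBS and NLoS-MBS processes with their own thresholds. Multiplying these three independent survival probabilities by the nearest-LoS-SBS density $f_{R_s}^{\mathrm{L}}(r)$ from Lemma \ref{LemmaUserClosestSBSProbabilty} yields $F_s^{\mathrm{L}}(r)$; the NLoS case and the two MBS cases follow by swapping the roles of $\mathrm{L}/\mathrm{NL}$ and of $s/m$ in the desired link and in the three thresholds.

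The main obstacle I anticipate is evaluating the inhomogeneous void-probability integrals cleanly rather than the combinatorics of the factorization. Because $\mathcal{P}_L(r)=\min\!\left(\tfrac{18}{r},1\right)(1-e^{-\beta r})+e^{-\beta r}$ contains both the $\min$ and the exponential obstacle term, the intensities $2\pi u\,\lambda_k\,\mathcal{P}_{\{\mathrm{L},\mathrm{NL}\}}(u)$ are not integrable in closed form, so the thresholds $r'_{\min}(r)$ must be treated piecewise around $u=18$ and the survival functions left as integral expressions. Care is also needed to verify that the four thinned processes are genuinely independent, so that the overall association event factorizes \emph{exactly} into the product of the three comparison probabilities and the contact density, rather than only approximately.
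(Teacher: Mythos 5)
Your proposal is correct in structure and shares the paper's skeleton: both decompose the association event into the contact density of the serving (tier, LoS/NLoS) type times three ``no stronger competitor'' probabilities, and both obtain each of those by inverting the biased-received-power comparison $P_s^{tr}B_s A_\mathrm{L} r^{-\alpha_\mathrm{L}} \ge P_k^{tr}B_k A_p (r')^{-\alpha_p}$ into an exclusion radius and taking a void probability. Where you genuinely diverge is in how the void probabilities are evaluated and how fading is handled. The paper treats each competing sub-process as a \emph{homogeneous} PPP with the full tier density, so every factor collapses to a closed form $e^{-\lambda_k\pi d^2}$ (citing Andrews--Baccelli--Ganti), and it carries the fading variables $h_s,h_m$ inside the cross-tier exclusion radii, e.g. $p_{ll}^{sm}(r)=e^{-\lambda_m\pi[(P_s^{tr}B_s h_s/(P_m^{tr}B_m h_m))^{-1/\alpha_\mathrm{L}}r]^2}$, without deconditioning over them. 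You instead thin each tier into independent LoS/NLoS sub-processes via the marking theorem and compute the \emph{inhomogeneous} void probabilities $\exp\bigl(-\int_0^{r'_{\min}}2\pi u\,\lambda_k\,\mathcal{P}_{\{\mathrm{L},\mathrm{NL}\}}(u)\,\mathrm{d}u\bigr)$, with fading averaged out of the comparison. Your route is the more rigorous one: the thinning argument is precisely what justifies that the joint event factorizes into a product (the paper asserts this implicitly), your factors are exact under the LoS/NLoS marking model whereas the paper's use of the full density $\lambda_k$ for, say, the NLoS interferer process overstates the exclusion, and your averaging removes the oddity of fading variables surviving into what should be unconditional association probabilities. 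What the paper's shortcut buys is closed-form exponential factors that remain tractable in the downstream coverage and ASE integrals, which your integral-form factors (non-elementary because of the $\min(18/r,1)$ and $e^{-\beta r}$ terms in $\mathcal{P}_\mathrm{L}$) would not. One small inconsistency to fix in your write-up: if you insist on exact thinned void probabilities for the three competitor factors, then for exactness the serving-link contact density should also be the thinned one, $2\pi r\lambda_s\mathcal{P}_\mathrm{L}(r)\exp\bigl(-\int_0^r 2\pi u\lambda_s\mathcal{P}_\mathrm{L}(u)\,\mathrm{d}u\bigr)$, rather than Lemma \ref{LemmaUserClosestSBSProbabilty}'s $\mathcal{P}_\mathrm{L}(r)\,2\pi r\lambda_s e^{-\pi r^2\lambda_s}$, which is itself built from the full-density void probability; mixing the two regimes means your final expression is no longer ``exact'' in the sense you claim.
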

\begin{proof}
The detailed proof procedure can be found in Appendix \ref{Appen2:AssoPro}.
\end{proof}

\subsubsection{SBS backhaul association probability}
The SBS will be associated with the MBS by the wireless backhaul link. The SBS backhaul association strategy is also based on the maximum biased received power from the MBS. Since the backhaul transmission includes LoS link and NLoS link, there exist two   backhaul association probabilities. Similar to Lemma \ref{LemmaUserAssociatedSBS}, the probabilities are as follows.
\begin{remark}
Similar to Lemma \ref{LemmaUserAssociatedSBS}, the probabilities that a typical SBS is associated with   the MBS tier by LoS link and NLoS
link are
\begin{align}\label{SBSAssociatedMBS}
&F_{bh}^{L}(r)=p_{ln}^{bh}(r) f_{R_bh}^{\mathrm{L}}(r),\\
&F_{bh}^{NL}(r)=p_{nl}^{bh}(r) f_{R_bh}^{\mathrm{NL}}(r),
\end{align}
where $p_{ln}^{bh}(r)$ is the probability that the SBS is associated with the LoS MBS and
the interference is from NLoS MBS.  $p_{nl}^{bh}(r)$ is the probability that the SBS is associated with the NLoS MBS
and the interference is from LoS MBS. $p_{ln}^{bh}(r)$ and $p_{nl}^{bh}(r)$  can be found in Appendix \ref{Appen2:AssoPro}.
\end{remark}

\subsection{SINR Distribution}
To study the APT and ASE performance of mABHetNets, we need first  investigate the SINR distribution of the user  covered by SBS/MBS tier via access link or the SINR distribution of  the SBS   covered by MBS via backhaul link.
This SINR distribution is defined as the SINR coverage probability that the received SINR is above a pre-designated threshold $\gamma$:
\begin{equation}
P^{\operatorname{cov}}(  \gamma)=\operatorname{Pr}[\operatorname{SINR}>\gamma]
\end{equation}
Since the user is covered either by SBS tier or by MBS tier, we first give the two SINR distributions. Then we give the SINR distribution of the typical SBS when it is covered by MBS.
\begin{proposition}\label{ProUserSINRCoverageProbability}
1) SINR coverage probabilities of user covered by SBS tier and MBS tier:

The SINR coverage probability that the user is associated with the SBS or MBS is
\begin{align}\label{UserSBSSINRCoverageProbability}
&\mathbb{P}_{k}^{cov}(\gamma)=P_{k,L}^{cov}(\gamma)+P_{k,NL}^{cov}(\gamma)\\
 P_{k,L}^{cov}(\gamma)&=\int_{0}^{\infty}  \exp \left(\frac{-\gamma N_{0}}{P_{k}^{tr}B_kA_{\mathrm{L}} r^{-\alpha_{\mathrm{L}}} }\right)  \mathcal{L}_{I_{k}}^{\mathrm{L}}  F_{k}^{\mathrm{L}}(r)\mathrm{d}r\label{UserSBSSINRCoverageProbabilityLoS}\\
P_{k,NL}^{cov}(\gamma)&=\int_{0}^{\infty}  \exp \left(\frac{-\gamma N_{0}}{P_{k}^{tr}B_kA_{\mathrm{NL}} r^{-\alpha_{\mathrm{NL}}} }\right)  \mathcal{L}_{I_{k}}^{\mathrm{NL}}   F_{k}^{\mathrm{NL}}(r)d r\label{UserSBSSINRCoverageProbabilityNLoS}
\end{align}
where $k\in\{s,m\}$ denotes SBS or MBS, respectively.  $P_{k,L}^{cov}(\gamma)=\mathbb{E}_{r }\left[\mathbb{P}\left[\mathrm{SINR}_{k}^{\mathrm{L}} (r ) \geq \gamma\right]\right]$ is the probability that the user is covered by SBS or MBS  with LoS based signal
and $P_{k,NL}^{cov}(\gamma)=\mathbb{E}_{r }\left[\mathbb{P}\left[\mathrm{SINR}_{k}^{\mathrm{NL}} (r ) \geq \gamma\right]\right]$ is the probability that the user is covered by SBS  or MBS  with NLoS based signal. $\gamma$ is the threshold for successful demodulation and decoding at the receiver. Besides, $\mathcal{L}_{I_{s}}^{\mathrm{L}} = \mathcal{L}_{I_{s,m}}^{\mathrm{L}}\left(\gamma r^{\alpha_L}\right)$,$\mathcal{L}_{I_{s}}^{\mathrm{NL}} =\mathcal{L}_{I_{s,m}}^{\mathrm{NL}}\left(\gamma r^{\alpha_{NL}}\right)$, $\mathcal{L}_{I_{m}}^{\mathrm{L}}= \mathcal{L}_{I'_{s,m}}^{\mathrm{L}}\left( \gamma r^{\alpha_L}\right)$ and $\mathcal{L}_{I_{m}}^{\mathrm{NL}}= \mathcal{L}_{I'_{s,m}}^{\mathrm{NL}}\left( \gamma r^{\alpha_L}\right)$

2) SINR coverage probabilities of SBS covered by  MBS:

The SINR coverage probability that the SBS is covered by the  MBS via wireless backhaul is:
\begin{align}\label{BHSINRCoverageProbability}
  \mathbb{P}_{bh}^{cov}(\gamma) &  =P_{bh,\mathrm{L}}^{cov}(\gamma)+P_{bh,\mathrm{NL}}^{cov}(\gamma)
\end{align}
\begin{align}\label{BHSINRCoverageProbabilityLoS}
  P_{bh,\mathrm{L}}^{cov}(\gamma)=\int_{0}^{\infty} \exp \left(\frac{-\gamma N_{0}}{P_{m}^{tr}  B_m A_{\mathrm{L}} r^{-\alpha_{\mathrm{L}}} }\right)  \mathcal{L}_{I_{bh}}^{\mathrm{L}}\left(\gamma r^{-\alpha_L}\right) F_{bh}^{\mathrm{L}}(r)dr
\end{align}
\begin{align}\label{BHSINRCoverageProbabilityNLoS}
  P_{bh,\mathrm{NL}}^{cov}(\gamma)=\int_{0}^{\infty}\exp \left(\frac{-\gamma N_{0}}{P_{m}^{tr}  B_mA_{\mathrm{NL}} r^{-\alpha_{\mathrm{L}}} }\right)  \mathcal{L}_{I_{bh}}^{\mathrm{NL}}\left(\gamma r^{-\alpha_{NL}}\right) F_{bh}^{\mathrm{NL}}(r) d r
\end{align}
Note that,  considering a more general fading model such as Nakagami does not provide any additional design insights, but it does complicate the analysis significantly. Similar to \cite{Rayleigh} in our paper,   the special case of Nakagami-Rayleigh fading is considered .
\end{proposition}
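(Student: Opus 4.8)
The plan is to compute each coverage probability by the standard stochastic-geometry recipe: condition on the serving-link distance and its LoS/NLoS type, exploit the exponential (Rayleigh, i.e.\ Nakagami with $m=1$) distribution of the desired small-scale fading to turn the SINR tail probability into an expectation of an exponential, peel off the deterministic noise factor, and recognize the remaining expectation over the interference field as a Laplace transform to be integrated against the association densities already supplied by Lemma \ref{LemmaUserAssociatedSBS}.

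First I would write $\mathbb{P}_{k}^{cov}(\gamma)=\Pr[\mathrm{SINR}_k\geq\gamma]$ and split the event according to whether the typical user is served over a LoS or a NLoS path. These two contributions are mutually exclusive, so by total probability $\mathbb{P}_{k}^{cov}=P_{k,L}^{cov}+P_{k,NL}^{cov}$, which is exactly the decomposition \eqref{UserSBSSINRCoverageProbability}. Next, fixing the association type (say LoS) and conditioning on the serving distance $r$, I substitute $L(r)=A_{\mathrm{L}}r^{-\alpha_{\mathrm{L}}}$ into \eqref{SBSUSERSINR} and rearrange the event $\mathrm{SINR}_s^{\mathrm{L}}(r)\geq\gamma$ to isolate the desired fading:
\begin{equation}
\Pr\!\left[h_s\geq\frac{\gamma(I_s+I_m+N_0)}{P_s^{tr}B_sA_{\mathrm{L}}r^{-\alpha_{\mathrm{L}}}}\right].
\end{equation}
Because $h_s$ is exponentially distributed with unit mean, this tail equals $\mathbb{E}\!\left[\exp\!\left(-\frac{\gamma(I_s+I_m+N_0)}{P_s^{tr}B_sA_{\mathrm{L}}r^{-\alpha_{\mathrm{L}}}}\right)\right]$.

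The deterministic noise term factors out as $\exp\!\left(-\frac{\gamma N_0}{P_s^{tr}B_sA_{\mathrm{L}}r^{-\alpha_{\mathrm{L}}}}\right)$, matching the leading exponential in \eqref{UserSBSSINRCoverageProbabilityLoS}, while the remaining expectation over the two independent interference fields $I_s$ and $I_m$ is the product form of the interference Laplace transform, which the paper encodes as $\mathcal{L}_{I_s}^{\mathrm{L}}=\mathcal{L}_{I_{s,m}}^{\mathrm{L}}(\gamma r^{\alpha_L})$. Integrating over $r$ against the LoS association density $F_s^{\mathrm{L}}(r)$ from Lemma \ref{LemmaUserAssociatedSBS} then produces \eqref{UserSBSSINRCoverageProbabilityLoS}; the NLoS branch and the MBS-tier case ($k=m$) follow by the identical argument with $A_{\mathrm{L}},\alpha_{\mathrm{L}}$ replaced by $A_{\mathrm{NL}},\alpha_{\mathrm{NL}}$ and the serving/interfering roles exchanged via \eqref{MBSUSERSINR}. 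For the backhaul statement the only structural change is that both the desired link and the single interference field originate from the MBS PPP, so the same steps with $F_{bh}^{\mathrm{L}}(r)$ and $\mathcal{L}_{I_{bh}}^{\mathrm{L}}$ give \eqref{BHSINRCoverageProbabilityLoS}--\eqref{BHSINRCoverageProbabilityNLoS}.

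The hard part will be the evaluation of the interference Laplace transforms $\mathcal{L}_{I_{s,m}}^{\mathrm{L}}$, $\mathcal{L}_{I_{bh}}^{\mathrm{L}}$, and their NLoS counterparts. This requires the probability generating functional of each PPP, careful bookkeeping of the LoS/NLoS thinning of the interferers (each interferer contributing a LoS or NLoS term weighted by the distance-dependent probabilities $\mathcal{P}_{\mathrm{L}}(r),\mathcal{P}_{\mathrm{NL}}(r)$), and the exclusion region enforcing that no interferer is received more strongly than the serving BS. I would defer this computation to a separate lemma, since it is the only step where the mmWave LoS/NLoS path-loss structure enters nontrivially; every other step is the conditioning-and-assembly argument described above.
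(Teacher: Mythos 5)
Your proposal is correct and follows essentially the same route as the paper's proof: total-probability decomposition over LoS/NLoS association, conditioning on the serving distance, using the unit-mean exponential (Rayleigh) fading tail to factor the conditional coverage into a noise exponential times the interference Laplace transform, and integrating against the association densities of Lemma \ref{LemmaUserAssociatedSBS}. The only difference is presentational: the paper evaluates $\mathcal{L}_{I_{s,m}}^{\mathrm{L}}$ explicitly inside the appendix proof via the PPP probability generating functional (with the LoS/NLoS thinning and exclusion regions you describe), whereas you defer that computation to a separate lemma, which is a legitimate organization of the same argument.
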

\begin{proof}
 The detailed proof of Proposition \ref{ProUserSINRCoverageProbability}  can  be found in Appendix \ref{Appen3:ProUserSINRCoverageProbability}.
\end{proof}

%

\section{APT and ASE of   Cache-enable mABHetNets }
APT and ASE are applied as two significant metrics to measure the network performance. The APT focuses on the average user QoS requirement in terms of data rate while ASE mainly  is used to measure the average network spectral efficiency. In this section, we first derive the APT. Then, we investigate the network ASE  in bps/Hz/m$^2$ and     analyze it.

\subsection{ APT of   Cache-enable mABHetNets}
APT captures the average number of bits that can be received by the user per unit area   per unit bandwidth given a pre-designated threshold $\gamma$ \cite{Throuhgput1}. The definition of APT  is
\begin{equation*}
   \mathcal{R}\left(  \gamma\right) =\lambda_k W_k\log _{2}\left(1+\gamma\right) \mathbb{P}\left\{\operatorname{SINR} \geq \gamma\right\}
\end{equation*}
where $\lambda_k(k=s,m)$ is the density of SBS or MBS and $W_k$ is allocated bandwidth to the user. $\gamma$ is the user's SINR requirement.

In cache-enable mABHetNets, APT is determined by cache capacity, bandwidth partition and SINR threshold. Then we
let APT  of cache-enable mABHetNets   denoted by
\begin{equation}\label{NetworkThroughput}
  \mathcal{R}(\eta,C,\gamma_0) = \mathcal{R}_s(\eta,C,\gamma_0)+\mathcal{R}_m(\eta,\gamma_0)
\end{equation}
where
$\mathcal{R}_s$ and $\mathcal{R}_m$ are   APT of  the SBS tier and MBS tier. $C$ is the cache capacity of SBS. $\eta$ is the bandwidth partition between the access link and the backahul link.  $\gamma_0$ is the SINR threshold to guarantee the user throughout requirement. The following corollaries will give the detailed APT expression.

\subsubsection{ APT of SBS tier}
For a user associated with a SBS, the transmission path include the  access link between SBS and user and the backhaul link bewteen the SBS and MBS. Besides, in cache-enabled mABHetNets, the caches in SBS tier also influence the file delivery in the transmission path. When the files are cached at the SBS, then the files can be delivered to user directly. At this time, the wireless backhaul between the SBS and the MBS will not be used. Otherwise, the uncache files will be delivered through the wireless backhaul link. Given the caches in SBS,   we first give APT of SBS tier.
\begin{corollary}\label{corolAPTSBS}
Since the transmission  can be LoS or NLoS in wireless access link and wireless backhaul link for user associated with SBS tier, there are four cases in the SBS-tier throughput. Then
\begin{equation}\label{APTSBS}
   \mathcal{R}_s(\eta,C,\gamma_0) = \mathcal{R}_{s}^{ll}+\mathcal{R}_{s}^{ln}+\mathcal{R}_{s}^{nl}+\mathcal{R}_{s}^{nn}
\end{equation}
where $\mathcal{R}_{s}^{ll},\mathcal{R}_{s}^{ln},\mathcal{R}_{s}^{nl}$ and $\mathcal{R}_{s}^{nn}$ denotes the network throughput when the wireless SBS link and the wireless backhaul link are both LoS,  the wireless SBS link is LoS and the wireless backhaul link is NLoS, the wireless SBS link is NLoS and the wireless backhaul link is LoS and  the wireless SBS link is NLoS and the wireless backhaul link is NLoS, respectively.Then $\mathcal{R}_{s}^{ll}(\eta,C,\gamma_0)  = \min \{ \lambda_s\eta W\log_2(1+\gamma_0)P^{cov}_{s,\mathrm{L}}(\gamma_0),\frac{1 }{ 1-p_h }\lambda_m W(1-\eta)\log_2(1+\gamma_0)P^{cov}_{bh,\mathrm{L}}(\gamma_0)\}$. Symbol $\min\{\}$ means that the minimum value between the wireless access link rate and the wireless backhaul link rate. Following the same logic,
$
    \mathcal{R}_{s}^{ln}    = \min \{ \lambda_s\eta W\log_2(1+\gamma_0)P^{cov}_{s,\mathrm{L}}(\gamma_0),\frac{1 }{ 1-p_h }\lambda_m W(1-\eta)\log_2(1+\gamma_0)P^{cov}_{bh,\mathrm{L}}(\gamma_0)\},
    \mathcal{R}_{s}^{nl}  = \min \{ \lambda_s\eta W\log_2(1+\gamma_0)P^{cov}_{s,\mathrm{L}}(\gamma_0),$ ~~~$\frac{1 }{ 1-p_h }\lambda_m W(1-\eta)\log_2(1+\gamma_0)P^{cov}_{bh,\mathrm{L}}(\gamma_0)\},
    \mathcal{R}_{s}^{nn}    = \min \{ \lambda_s\eta W\log_2(1+\gamma_0)P^{cov}_{s,\mathrm{L}}(\gamma_0),\frac{1 }{ 1-p_h }\lambda_m W(1 -\eta)\log_2(1+\gamma_0)P^{cov}_{bh,\mathrm{L}}(\gamma_0)\}
$. Note that, $p_h = p_h(C) =\frac{\sum_{f=1}^{C}f^{-\gamma_p}}{\sum_{g=1}^{F}g^{-\gamma_p}}$ is the cache hit ratio in the SBS tier. $(1-p_h)$ reflects the probability   that the files that are not cached in SBS tier will delivered through the backhaul link.  $P^{cov}_{s,\mathrm{L}}(\gamma_0)$ and $P^{cov}_{s,\mathrm{NL}}(\gamma_0)$ are the SINR coverage probability that the user is associated with the SBS via LoS and NLoS path in the Proposition \ref{ProUserSINRCoverageProbability}.
\end{corollary}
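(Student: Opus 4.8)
The plan is to assemble each of the four sub-throughputs $\mathcal{R}_s^{ll}$, $\mathcal{R}_s^{ln}$, $\mathcal{R}_s^{nl}$, $\mathcal{R}_s^{nn}$ from the generic APT definition $\mathcal{R}(\gamma)=\lambda_k W_k\log_2(1+\gamma)\,\mathbb{P}\{\mathrm{SINR}\geq\gamma\}$, applied separately to the access leg (SBS$\to$user) and the backhaul leg (MBS$\to$SBS), then to fuse the two legs into one end-to-end rate, and finally to sum over the LoS/NLoS status of both legs. First I would write the access-leg APT using the SBS density $\lambda_s$, the access bandwidth $W_{ac}=\eta W$, and the coverage probabilities $P^{cov}_{s,\mathrm{L}}(\gamma_0)$ and $P^{cov}_{s,\mathrm{NL}}(\gamma_0)$ from Proposition~\ref{ProUserSINRCoverageProbability}, yielding the access rate $\lambda_s\eta W\log_2(1+\gamma_0)P^{cov}_{s,\mathrm{L}}(\gamma_0)$ and its NLoS analogue. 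Symmetrically, the backhaul-leg APT uses the MBS density $\lambda_m$, the backhaul bandwidth $W_{bh}=(1-\eta)W$, and the backhaul coverage probabilities $P^{cov}_{bh,\mathrm{L}}(\gamma_0)$, $P^{cov}_{bh,\mathrm{NL}}(\gamma_0)$.

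The crux of the argument is the interaction between caching and the two-hop path. I would argue by a flow-conservation/bottleneck reasoning: for a user served by a SBS, a fraction $p_h=p_h(C)$ of its requested content is a cache hit and is delivered over the access link alone, while the remaining fraction $(1-p_h)$ must first be fetched over the wireless backhaul. Hence a backhaul rate $R_{bh}$ can sustain a user-facing demand of $R_{bh}/(1-p_h)$, which supplies the scaling factor $\tfrac{1}{1-p_h}$ in front of every backhaul term. Since the user's data must traverse whichever of the two legs is slower, the sustainable end-to-end rate for a fixed LoS/NLoS configuration is the minimum of the scaled backhaul rate and the access rate; this produces the $\min\{\cdot,\cdot\}$ inside each $\mathcal{R}_s^{xy}$.

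Finally I would decompose over the joint LoS/NLoS state of the two legs. Because the path-loss model \eqref{LOSModel} assigns the LoS/NLoS status of each link independently as a function of its own length, and because the coverage probabilities in Proposition~\ref{ProUserSINRCoverageProbability} are already split additively into their LoS and NLoS contributions, the four joint events indexed by $x,y\in\{\mathrm{L},\mathrm{NL}\}$ (access state $x$, backhaul state $y$) are mutually exclusive and exhaustive. Summing the corresponding end-to-end rates then gives $\mathcal{R}_s=\mathcal{R}_s^{ll}+\mathcal{R}_s^{ln}+\mathcal{R}_s^{nl}+\mathcal{R}_s^{nn}$, where the access factor in $\mathcal{R}_s^{nl},\mathcal{R}_s^{nn}$ carries $P^{cov}_{s,\mathrm{NL}}$ and the backhaul factor in $\mathcal{R}_s^{ln},\mathcal{R}_s^{nn}$ carries $P^{cov}_{bh,\mathrm{NL}}$.

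The main obstacle I anticipate is justifying the two modeling steps rather than the algebra: namely, (i) that the backhaul-to-access coupling is faithfully captured by the single deterministic factor $\tfrac{1}{1-p_h}$ together with a per-configuration $\min$, which tacitly assumes that the cache-hit process is statistically decoupled from the instantaneous link rates and that the delivered user rate equals the bottleneck rate; and (ii) that the LoS/NLoS states of the access and backhaul legs may be treated as independent so that the four cases add without cross terms. Both are modeling premises that must be stated explicitly for the sum-of-four-minima form to follow; once granted, the remaining steps are direct substitutions of Proposition~\ref{ProUserSINRCoverageProbability} and the bandwidth partition $W_{ac}=\eta W$, $W_{bh}=(1-\eta)W$.
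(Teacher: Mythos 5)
Your proposal is correct and takes essentially the same route as the paper: the paper offers no separate appendix proof for this corollary, treating it as an immediate consequence of the APT definition, the bandwidth split $W_{ac}=\eta W$, $W_{bh}=(1-\eta)W$, and Proposition~1, via exactly your argument --- the bottleneck $\min$ between the access leg and the backhaul leg, the $\tfrac{1}{1-p_h}$ cache-offloading factor from flow conservation, and the sum over the four mutually exclusive LoS/NLoS configurations (the same logic appears explicitly in the paper's proof of Proposition~2 in Appendix~D). One remark: your final assignment of coverage probabilities is the intended one, since the paper's displayed expressions for $\mathcal{R}_{s}^{ln}$, $\mathcal{R}_{s}^{nl}$, $\mathcal{R}_{s}^{nn}$ repeat the LoS terms $P^{cov}_{s,\mathrm{L}}$ and $P^{cov}_{bh,\mathrm{L}}$ verbatim (an apparent copy-paste slip), whereas you correctly place $P^{cov}_{s,\mathrm{NL}}$ on the NLoS access legs and $P^{cov}_{bh,\mathrm{NL}}$ on the NLoS backhaul legs.
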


\subsubsection{ APT of MBS tier}
Since the signal is directly transmit by MBS to user via access link, similar to  Corollary \ref{corolAPTSBS}, we can easily give the expression of APT of MBS tier.
\begin{corollary}\label{corolAPTMBS}
It is easy to obtain the average throughput of the MBS tier:
    \begin{align}\label{Throughput}
    \mathcal{R}_m(\eta,\gamma_0)
    &   =\lambda_m\eta W\log_2(1+\gamma_0)P^{cov}_{m,\mathrm{L}}(\gamma_0)    + \lambda_m\eta W\log_2(1+\gamma_0)P^{cov}_{m,\mathrm{NL}}(\gamma_0)
    \end{align}
where $P^{cov}_{m,\mathrm{L}}(\gamma_0)$ and $P^{cov}_{m,\mathrm{NL}}(\gamma_0)$ are the SINR coverage probability that the user is associated with the MBS via LoS and NLoS path in the Proposition \ref{ProUserSINRCoverageProbability}.
\end{corollary}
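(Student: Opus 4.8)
The plan is to derive $\mathcal{R}_m(\eta,\gamma_0)$ directly from the general APT definition $\mathcal{R}(\gamma)=\lambda_k W_k\log_2(1+\gamma)\,\mathbb{P}\{\operatorname{SINR}\ge\gamma\}$ by specializing the tier index to $k=m$ and then substituting the MBS SINR coverage probability already supplied by Proposition \ref{ProUserSINRCoverageProbability}. The essential structural observation, which distinguishes this case from Corollary \ref{corolAPTSBS}, is that an MBS is connected to the core network through a high-capacity optical fiber link rather than through a wireless mmWave backhaul. Consequently every file requested from an MBS is delivered over the access link alone: there is no backhaul bottleneck, hence no $\min\{\cdot\}$ between an access-link rate and a backhaul-link rate, and the entire downlink access partition $W_m=\eta W$ available to the MBS tier contributes to the throughput.

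First I would set $\lambda_k=\lambda_m$ and $W_k=\eta W$ in the APT definition, giving $\mathcal{R}_m(\eta,\gamma_0)=\lambda_m\,\eta W\,\log_2(1+\gamma_0)\,\mathbb{P}_m^{cov}(\gamma_0)$, where $\mathbb{P}_m^{cov}(\gamma_0)=\mathbb{P}\{\operatorname{SINR}_m\ge\gamma_0\}$ is the probability that a typical user associated with the MBS tier meets the SINR threshold $\gamma_0$. Next I would invoke part (1) of Proposition \ref{ProUserSINRCoverageProbability} with $k=m$, which already decomposes this coverage probability into the contributions of the LoS and NLoS serving links,
\begin{equation}
\mathbb{P}_m^{cov}(\gamma_0)=P^{cov}_{m,\mathrm{L}}(\gamma_0)+P^{cov}_{m,\mathrm{NL}}(\gamma_0).
\end{equation}
This decomposition is exactly the event partition used throughout Sec. III: the desired signal reaches the typical user either through a LoS path (with density $F_m^{\mathrm{L}}$) or through an NLoS path (with density $F_m^{\mathrm{NL}}$), and these two events are disjoint, so their coverage probabilities add. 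Substituting and distributing the common prefactor $\lambda_m\,\eta W\,\log_2(1+\gamma_0)$ across the two terms yields the stated expression \eqref{Throughput}.

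Because the MBS access link carries the traffic on its own, the derivation is a direct specialization of the APT definition followed by the Proposition \ref{ProUserSINRCoverageProbability} decomposition, and I do not anticipate a genuine analytical obstacle. The only point that requires care---and the place where an error would most plausibly enter---is the bookkeeping of the bandwidth argument: one must confirm that the MBS tier draws from the access partition $\eta W$ rather than the backhaul partition $(1-\eta)W$, consistent with the access-link role of the MBS and with the bandwidth partition model of Sec. II-D. Once this identification is fixed, the two LoS/NLoS coverage terms furnished by Proposition \ref{ProUserSINRCoverageProbability} complete the proof.
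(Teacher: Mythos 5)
Your proposal is correct and follows essentially the same route as the paper: the paper justifies this corollary by noting that the MBS delivers files to the user directly over the access link (fiber-connected to the core, full library cached), so the APT definition is specialized with $\lambda_k=\lambda_m$, $W_k=\eta W$, and the coverage probability is split into the LoS and NLoS terms $P^{cov}_{m,\mathrm{L}}(\gamma_0)+P^{cov}_{m,\mathrm{NL}}(\gamma_0)$ from Proposition \ref{ProUserSINRCoverageProbability}, exactly as you do. Your explicit remark that no $\min\{\cdot\}$ with a backhaul rate appears, and that the MBS draws from the access partition $\eta W$, is precisely the distinction from Corollary \ref{corolAPTSBS} that the paper relies on.
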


\subsection{User Spectral Efficiency   Distribution  of Cache-enable mABHetNets}
Before giving the ASE of mABHetNets, the user spectral efficiency should be first analyzed.
For a user associated with a SBS, the transmission rate of the user is not only related with the access  rate from SBS, but also with the backhaul rate from MBS. When the files are cached at the SBS, then the files can be delivered to user directly. At this time, the wireless backhaul between the SBS and the MBS will not be used. Otherwise, the uncache files will be delivered first through the wireless backhaul link then the access link.  Let $\rho$ be the required user spectral efficiency and the $p_h$ in  (\ref{HitRatio}) be the file hit ratio at the SBS. Then the required wireless backhaul link  spectral efficiency is $(1-p_h)\rho$. The definition of the user spectral efficiency distribution is  $\mathbb{P}\left[R>\rho \right]$. Then in the following the proposition, we give the detailed user spectral efficiency distribution of the user.
\begin{proposition}\label{ProRateCoverageDistribution}
1)The   spectral efficiency  distribution of a user associated with an SBS:
When a user is associated with a SBS and the files are cached in SBS, the cached file will be delivered  directly   from SBS by the wireless access link. Otherwise, the uncached file will first delivered  by  the MBS through the backhaul link then through the access link. Since the wireless access link and wireless backhaul link can be LoS or NLOS, then four cases will exist as follows:
    \begin{align}
        &\text{LoS access link \& LoS backhaul link:}\mathbb{P}\left[R_{s,1}>\rho \right]=\mathbb{P}_{s,\mathrm{L}}^{cov}(2^{\frac{\rho}{\eta}}-1|r_{s} )\cdot \mathbb{P}_{bh,\mathrm{L}}^{cov}(2^{\frac{(1-p_h)\rho}{(1-\eta) }}-1|r_{bh} ),\nonumber\\
        &\text{LoS access link \& NLoS backhaul link:}\mathbb{P}\left[R_{s,2}>\rho \right]=\mathbb{P}_{s,\mathrm{L}}^{cov}(2^{\frac{\rho}{\eta}}-1|r_{s} ) \cdot   \mathbb{P}_{bh,\mathrm{NL}}^{cov}(2^{\frac{(1-p_h)\rho}{(1-\eta) }}-1|r_{bh} ),\nonumber\\
        &\text{NLoS access link \& LoS backhaul link:}\mathbb{P}\left[R_{s,3}>\rho \right]=\mathbb{P}_{s,\mathrm{NL}}^{cov}(2^{\frac{\rho}{\eta}}-1|r_{s} ) \cdot   \mathbb{P}_{bh,\mathrm{L}}^{cov}(2^{\frac{(1-p_h)\rho}{(1-\eta) }}-1|r_{bh} ),\nonumber\\
        &\text{NLoS access link \&NLoS backhaul link:}\mathbb{P}\left[R_{s,4}>\rho \right]=\mathbb{P}_{s,\mathrm{NL}}^{cov}(2^{\frac{\rho}{\eta}}-1|r_{s} )     \mathbb{P}_{bh,\mathrm{NL}}^{cov}(2^{\frac{(1-p_h)\rho}{(1-\eta) }}-1|r_{bh} ).\nonumber
    \end{align}
 where $R_{s,1}$ denotes the spectral efficiency of the link when the wireless access link is LoS and the backahul link is NLoS. Similarly,   $R_{s,2}$,  $R_{s,2}$ and $R_{s,4}$ are the other cases. The file hit ratio is $p_h=p_{h}(C)=\frac{\sum_{f=1}^{C}f^{-\gamma_p}}{\sum_{g=1}^{F}g^{-\gamma_p}}$. The transmission power of a SBS is $P_s^{tr}=P_s^{tr}(C)=P'_s-w'_{\mathrm{ca}}C$, which is related with the SBS cache capacity $C$. Besides, $\mathbb{P}_{s,\mathrm{L}}^{cov}{(\cdot)}$ and $\mathbb{P}_{s,\mathrm{NL}}^{cov}{(\cdot)}$ are the SINR distributions of the SBS-user via LoS and NLoS path. $\mathbb{P}_{bh,\mathrm{L}}^{cov}{(\cdot)}$ and $\mathbb{P}_{bh,\mathrm{NL}}^{cov}{(\cdot)}$ are the SINR distributions of the backhaul link between the SBS and MBS via LoS and NLoS path in Proposition \ref{ProUserSINRCoverageProbability}.

2)The spectral efficiency   distribution   of a user associated with an MBS:
    \begin{align}
      &\text{for LoS based the access link  :~~~~~}
      \mathbb{P}\left[R_{m,1}>\rho \right]=\mathbb{P}_{m,\mathrm{L}}^{cov}(2^{\frac{\rho}{\eta}}-1|r_{m}), \\
      &\text{for NLoS based the access link  :~~~}
      \mathbb{P}\left[R_{m,2}>\rho\right]=\mathbb{P}_{m,\mathrm{NL}}^{cov}(2^{\frac{\rho}{\eta}}-1|r_{m}),
    \end{align}
where    $\mathbb{P}_{m,\mathrm{L}}^{cov}{(\cdot)}$ and $\mathbb{P}_{m,\mathrm{NL}}^{cov}{(\cdot)}$ are the SINR distributions of the MBS-user via LoS and NLoS path in Proposition \ref{ProUserSINRCoverageProbability}.
\end{proposition}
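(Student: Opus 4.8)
The plan is to convert the end-to-end spectral-efficiency requirement $R>\rho$ into SINR-threshold events on the individual access and backhaul links, and then invoke the coverage probabilities already established in Proposition \ref{ProUserSINRCoverageProbability}. First I would write the access-link spectral efficiency of an SBS-served user as $R_{ac}=\eta\log_2(1+\operatorname{SINR}_s)$, since the access link is allocated the fraction $\eta$ of the total bandwidth $W$, and the backhaul spectral efficiency as $R_{bh}=(1-\eta)\log_2(1+\operatorname{SINR}_{bh})$. Inverting the monotone map $x\mapsto\eta\log_2(1+x)$ shows that $\{R_{ac}>\rho\}$ is exactly $\{\operatorname{SINR}_s>2^{\rho/\eta}-1\}$, which supplies the threshold $2^{\rho/\eta}-1$ appearing in every SBS expression.

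For the SBS case the key observation is that the two-hop (backhaul-then-access) delivery of an uncached file imposes two simultaneous constraints. The access link must carry the full requested rate $\rho$ because both hit and miss requests traverse it, whereas the backhaul link need only carry the miss traffic, whose effective demand is $(1-p_h)\rho$ since the fraction $p_h$ of requests is served locally from the cache. Hence $\{R>\rho\}=\{R_{ac}>\rho\}\cap\{R_{bh}>(1-p_h)\rho\}$, and inverting the backhaul rate map gives the second threshold $2^{(1-p_h)\rho/(1-\eta)}-1$. Treating the access and backhaul SINRs as independent lets the joint probability factor into the product of the two coverage probabilities; decomposing each link into its LoS and NLoS realizations (four combinations) and substituting the conditional coverage probabilities $\mathbb{P}_{s,\mathrm{L}}^{cov},\mathbb{P}_{s,\mathrm{NL}}^{cov},\mathbb{P}_{bh,\mathrm{L}}^{cov},\mathbb{P}_{bh,\mathrm{NL}}^{cov}$ from Proposition \ref{ProUserSINRCoverageProbability} then yields the four stated formulas.

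The MBS case is the degenerate one-hop instance: since each MBS is fiber-connected to the core, there is no backhaul bottleneck, so $\{R>\rho\}$ collapses to the single access event $\{\operatorname{SINR}_m>2^{\rho/\eta}-1\}$, again split into its LoS and NLoS parts. The main obstacle I anticipate is twofold. First, justifying the independence used to factor the SBS-case product: the access SINR and the backhaul SINR both see interference from the MBS point process $\Phi_m$, so they are not strictly independent, and I would have to argue, as is standard in these stochastic-geometry analyses, that the induced correlation is negligible or else adopt independence as an explicit modeling approximation. Second, pinning down precisely why caching rescales only the backhaul demand by the factor $(1-p_h)$ while leaving the access demand at the full $\rho$; this rests on the routing model in which every request uses the access hop but only the miss fraction traverses the backhaul. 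Once these two modeling points are granted, the remainder is a routine monotone inversion of $x\mapsto\eta\log_2(1+x)$ and substitution of the coverage probabilities from Proposition \ref{ProUserSINRCoverageProbability}.
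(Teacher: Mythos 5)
Your proposal is correct and follows essentially the same route as the paper: the paper defines the SBS-user rate as $R_s=\min\bigl\{\eta\log_2(1+\operatorname{SINR}_{s}),\tfrac{1-\eta}{1-p_h}\log_2(1+\operatorname{SINR}_{bh})\bigr\}$, so that $\{R_s>\rho\}$ is exactly your intersection $\{R_{ac}>\rho\}\cap\{R_{bh}>(1-p_h)\rho\}$, and it then performs the same monotone inversion, the same factorization of the joint probability into a product (assuming independence of the two links, which the paper uses silently and you rightly flag as a modeling approximation), the same four-way LoS/NLoS split, and the same one-hop reduction for the MBS case.
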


\begin{proof}
  The proof procedure can be found in the Appendix \ref{Appen4:ProRateCoverageDistribution}.
\end{proof}

\subsection{ASE   of Cache-enabled mABHetNets}
According to the definition of the ASE in \cite{Throuhgput1}, ASE can be expressed as follows:
\begin{align}\label{AreaSpectralEfficiency}
 \mathcal{A}(\eta,C)
 &=\mathcal{A}_s(\eta,C)+\mathcal{A}_m(\eta)
 =\lambda \mathbb{E}\left[R_s \right]+\lambda \mathbb{E}\left[R_m \right]\\
 &=\lambda \mathbb{E}_r\left[\int_{0}^{\infty}\mathbb{P}\left[R_s>\rho|r\right]\mathrm{d}\rho\right]
 +\lambda \mathbb{E}_r\left[\int_{0}^{\infty}\mathbb{P}\left[R_m>\rho|r\right]\mathrm{d}\rho\right]\nonumber
\end{align}
where $\eta$ is the part of the spectrum allocated to the access link and $C$ is the cache capacity of the SBS. $\mathcal{A}_s(\eta,C)$ is the ASE of SBS tier and $\mathbb{P}\left[R_s>\rho\right]$ is  user's spectral efficiency distribution when the user is associated with the SBS tier. $\mathcal{A}_m(\eta)$ is the ASE of MBS tier and $\mathbb{P}\left[R_m>\rho\right]$ is the rate   distribution when the user is associated with the MBS tier.

Next, we will derive the $\mathcal{A}_s(\eta,C)$ and $\mathcal{A}_m(\eta )$, respectively.
\begin{corollary} \label{ProBSAreaSpectralEfficiency}
Based on the four cases of the user's spectral efficiency   distribution in Proposition \ref{ProRateCoverageDistribution}, the
ASE  of SBS tier is
    \begin{align}\label{SBSAreaSpectralEfficiency}
        &\mathcal{A}_s(\eta,C)=\lambda (\int_{0}^{\infty}\int_{0}^{\infty}\left[\int_{0}^{\infty}\mathbb{P}\left[R_{s,1}>\rho\right]\mathrm{d}
        \rho\right]F_{s}^\mathrm{L}(r_{s})F_{bh}^\mathrm{L}{(r_{bh})}\\
        &+\left[\int_{0}^{\infty}\mathbb{P}\left[R_{s,2}>\rho\right]\mathrm{d}
        \rho\right]F_{s}^\mathrm{L}(r_{s})F_{bh}^\mathrm{NL}{(r_{bh})} +\left[\int_{0}^{\infty}\mathbb{P}\left[R_{s,3}>\rho\right]\mathrm{d}
        \rho\right]F_{s}^\mathrm{NL}(r_{s})F_{bh}^\mathrm{L}{(r_{bh})}\nonumber\\
        &+\left[\int_{0}^{\infty}\mathbb{P}\left[R_{s,4}>\rho\right]\mathrm{d}
        \rho\right]F_{s}^\mathrm{NL}(r_{s})F_{bh}^\mathrm{NL}{(r_{bh})}\nonumber
        \mathrm{d}r_{s}\mathrm{d}r_{bh})
         \end{align}
where cache capacity $C$ exists in the spectral efficiency $\mathbb{P}\left[R_{s,1}>\rho\right]$, $\mathbb{P}\left[R_{s,2}>\rho\right]$, $\mathbb{P}\left[R_{s,3}>\rho\right]$ and $\mathbb{P}\left[R_{s,4}>\rho\right]$ (proposition \ref{ProRateCoverageDistribution}).

And  ASE  of MBS tier is
        \begin{align}\label{MBSAreaSpectralEfficiency}
            &\mathcal{A}_m(\eta )= \lambda(\int_{0}^{\infty}
            \left[\int_{0}^{\infty}\mathbb{P}\left[R_{m,1}>\rho\right]\mathrm{d}
            \rho\right]F_{m}^\mathrm{L}(r_{m}) +\left[\int_{0}^{\infty}\mathbb{P}\left[R_{m,2}>\rho\right]\mathrm{d}
            \rho\right]F_{m}^\mathrm{NL}(r_{m})  \mathrm{d}r_m)
         \end{align}
where $\mathbb{P}\left[R_{m,1}>\rho \right]=\mathbb{P}_{m,L}^{cov}(2^{\frac{\rho}{\eta}}-1|r_{m} )$ and $ \mathbb{P}\left[R_{m,2}>\rho \right]=\mathbb{P}_{m,\mathrm{NL}}^{cov}(2^{\frac{\rho}{\eta}}-1|r_{m} )$.
\end{corollary}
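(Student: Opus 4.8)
The plan is to start directly from the definition of ASE in (\ref{AreaSpectralEfficiency}) and to reduce the computation of $\mathbb{E}[R_s]$ and $\mathbb{E}[R_m]$ to the spectral-efficiency distributions already established in Proposition \ref{ProRateCoverageDistribution}. The engine of the whole argument is the elementary tail-integral identity $\mathbb{E}[X]=\int_0^\infty \mathbb{P}[X>\rho]\,\mathrm{d}\rho$, valid for any nonnegative random variable $X$; applying it to the (nonnegative) per-user spectral efficiencies $R_s$ and $R_m$ turns the expectations into integrals of the coverage-type probabilities over the rate threshold $\rho$. This is exactly the inner $\int_0^\infty \mathbb{P}[R_{s,i}>\rho]\,\mathrm{d}\rho$ appearing in (\ref{SBSAreaSpectralEfficiency}) and the analogous MBS terms.

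For the SBS tier, I would condition on the serving distances $r_s$ (user-to-SBS access) and $r_{bh}$ (SBS-to-MBS backhaul) and then partition the sample space according to the four mutually exclusive and exhaustive propagation configurations: access LoS/NLoS crossed with backhaul LoS/NLoS. By the law of total probability the conditional tail $\mathbb{P}[R_s>\rho\mid r_s,r_{bh}]$ decomposes into the four contributions, each of which is precisely one of $\mathbb{P}[R_{s,1}>\rho],\dots,\mathbb{P}[R_{s,4}>\rho]$ supplied by Proposition \ref{ProRateCoverageDistribution}. Next I would decondition over the distances: the event that the user associates with an SBS at distance $r_s$ via a LoS (resp. NLoS) path carries weight $F_s^{\mathrm{L}}(r_s)$ (resp. $F_s^{\mathrm{NL}}(r_s)$) from Lemma \ref{LemmaUserAssociatedSBS}, while the backhaul association at distance $r_{bh}$ carries weight $F_{bh}^{\mathrm{L}}(r_{bh})$ or $F_{bh}^{\mathrm{NL}}(r_{bh})$ from the corresponding Remark. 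Because the access PPP seen by the user and the MBS PPP governing the backhaul are independent, the joint association density factorizes into the products $F_s^{\mathrm{L}}(r_s)F_{bh}^{\mathrm{L}}(r_{bh})$, etc., and integrating each of the four weighted terms against $\mathrm{d}r_s\,\mathrm{d}r_{bh}$ and prefixing $\lambda$ yields exactly (\ref{SBSAreaSpectralEfficiency}).

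The MBS-tier term $\mathcal{A}_m(\eta)$ follows the same recipe but is simpler, since a user served directly by an MBS uses only the access link and incurs no backhaul hop. Here only two configurations remain---access LoS and access NLoS---so the tail decomposes into $\mathbb{P}[R_{m,1}>\rho]$ and $\mathbb{P}[R_{m,2}>\rho]$ weighted by $F_m^{\mathrm{L}}(r_m)$ and $F_m^{\mathrm{NL}}(r_m)$ respectively, and a single deconditioning integral over $r_m$ recovers (\ref{MBSAreaSpectralEfficiency}). The dependence on $C$ enters only implicitly, through the hit ratio $p_h(C)$ and the cache-reduced transmit power $P_s^{tr}(C)=P'_s-w'_{\mathrm{ca}}C$ that already sit inside the coverage probabilities of Proposition \ref{ProRateCoverageDistribution}, so no new derivation is needed for the cache variable.

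The step I expect to be the main obstacle is the justification of the factorized form rather than the mechanical integration. Two independence claims must be made explicit: first, that conditioned on $r_s$ and $r_{bh}$ the access-link SINR and the backhaul-link SINR are independent, so that the joint requirement ``access supports $\rho$ and backhaul supports $(1-p_h)\rho$'' factors into the product of the two coverage probabilities written in Proposition \ref{ProRateCoverageDistribution}; and second, that the association-geometry weights on the access and backhaul sides factor because they are driven by independent point processes. Both rest on the independent-PPP modeling assumption and on the fact that the typical-user access cell and the serving SBS's backhaul link involve disjoint interferer sets, but I would state them as standing assumptions and verify that interchanging the $\rho$-integral with the distance expectations is legitimate by Tonelli's theorem, the integrands being nonnegative.
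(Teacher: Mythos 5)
Your proposal is correct and follows essentially the same route as the paper, which states this corollary without a separate proof precisely because it is the direct combination of the ASE definition in (\ref{AreaSpectralEfficiency}) (already in tail-integral, deconditioned form), the four LoS/NLoS spectral-efficiency distributions of Proposition \ref{ProRateCoverageDistribution}, and the association densities $F_{s}^{\mathrm{L}},F_{s}^{\mathrm{NL}},F_{bh}^{\mathrm{L}},F_{bh}^{\mathrm{NL}}$. Your explicit flagging of the two independence claims (access/backhaul SINR factorization and factorization of the association weights, both inherited from the independent-PPP model) and of the Tonelli interchange adds rigor that the paper leaves implicit, but does not change the argument.
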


\begin{remark}
  From (\ref{AreaSpectralEfficiency}) and (\ref{SBSAreaSpectralEfficiency}), when the file hit ratio is improved,  more files can be delivered from the SBS tier to users directly and  the network throughput can be increased.  Therefore, ASE can be increased. On one hand, the most efficient case is the file popularity can be more centralized, which means less files own more high popularity(i.e.,$p_h=p_{h}(C)=\frac{\sum_{f=1}^{C}f^{-\gamma_p}}{\sum_{g=1}^{F}g^{-\gamma_p}}$ has a higher $\gamma_p$ ). On the other hand, the cache capacity is increased to raise the hit ratio $p_h$. However, we can  see that more power is consumed for cache and the transmission power is decreased, which decrease the ASE. All the analysis will be verified in the numerical results.
\end{remark}
\subsection{A Special  Case:  the Noise-Limited mABHetNets}

 Based on the research in  \cite{NoiseLimit,NoiseLimit1,NoiseLimit2,LowDense}, under the lower SBS density, mABHetNets will be    noise-limited. In mmWave communication systems, mmWave has a high signal transmission strength in LoS   links. However, when the density of  SBS is lower, the number of  LoS link will decrease.  The LoS based interferences  of the intra-tier and cross-tier can be omitted. Therefore, the noise has a more important influence on the mmWave signal transmission. According to such situation, we want to investigate the ASE in cache-enabled mABHetNets in this noise-limited case.\cite{LowDense}.

 Here, the density $\lambda_m$ of MBS in a real mABHetNets is low and fixed. Therefore, in this section, the low density is referring in particular to the   SBS denstiy$\lambda_s$.

\begin{corollary}
\label{NoiseASE}
The ASE of mABHetNets in the noise-limited case is
\begin{equation}\label{NoiseASEExpression}
\mathcal{A}^{\mathrm{Noi}}(\eta,C)=\mathcal{A}_{s}^{\mathrm{Noi}}(\eta,C)+\mathcal{A}_{m}^{\mathrm{Noi}}(\eta)
\end{equation}
where $\mathcal{A}_{s}^{\mathrm{Noi}}(\eta,C)$ and $\mathcal{A}_{m}^{\mathrm{Noi}}(\eta)$ are the ASEs of SBS tier and MBS tier in the noise-limited case, respectively.

With  $I_r=I_m=I'_r=I'_m=0$ in  the general ASE expression of SBS tier (\ref{SBSAreaSpectralEfficiency}), ASE  of SBS tier in the noise-limited mABHetNets is
\begin{align}\label{ASESBSNoiseLimited}
  \mathcal{A}_s^{\mathrm{Noi}}(\eta,C)
  &= \lambda  (\int_{0}^{\infty}\int_{0}^{\infty} \int_{0}^{\infty}A_1(\eta,C) F_s^{\mathrm{L}}(r_s) F_{bh}^{\mathrm{L}}(r_{bh})  +   A_2(\eta,C) F_s^{\mathrm{L}}(r_s)F_{bh}^{\mathrm{NL}}(r_{bh}) \\
  & A_3(\eta,C) F_s^{\mathrm{NL}}(r_s)F_{bh}^{\mathrm{L}}(r_{bh})
  +
   A_4(\eta,C) F_s^{\mathrm{NL}}(r_s)F_{bh}^{\mathrm{NL}}(r_{bh})\mathrm{d}\rho\mathrm{d}r_s\mathrm{d}r_{bh})\nonumber
\end{align}
where
    $\scriptstyle  A_1(\eta,C) =\exp \left(\frac{-(2^\frac{\rho}{\eta }-1) N_{0}}{P_s^{tr}  B_s A_{\mathrm{L}} r_s^{-\alpha_{\mathrm{L}}} }+\frac{-(2^\frac{(1-p_h(C))\rho}{(1-\eta)}-1)N_{0}}{P_{m}^{tr} B_m A_{\mathrm{L}} r_{bh}^{-\alpha_{\mathrm{L}}} }\right)$,$\scriptstyle A_2(\eta,C)=\exp \left(\frac{-(2^\frac{\rho}{\eta }-1) N_{0}}{P_s^{tr}B_s A_{\mathrm{L}} r_s^{-\alpha_{\mathrm{L}}} }+\frac{-(2^\frac{(1-p_h(C))\rho}{(1-\eta)}-1)N_{0}}{P_{m}^{tr} B_m A_{\mathrm{NL}} r_{bh}^{-\alpha_{\mathrm{NL}}} }\right),~~~~~~~~$
    $ ~~~~~~~~\scriptstyle A_3(\eta,C)=\exp \left(\frac{-(2^\frac{\rho}{\eta }-1) N_{0}}{P_s^{tr}B_s A_{\mathrm{NL}} r_s^{-\alpha_{\mathrm{NL}}} }+\frac{-(2^\frac{(1-p_h(C))\rho}{(1-\eta)}-1)N_{0}}{P_{m}^{tr} B_m A_{\mathrm{L}} r_{bh}^{-\alpha_{\mathrm{L}}} }\right) $,
  $\scriptstyle A_4(\eta,C)=\exp \left(\frac{-(2^\frac{\rho}{\eta }-1) N_{0}}{P_s^{tr} B_sA_{\mathrm{NL}} r_s^{-\alpha_{\mathrm{NL}}} }+\frac{-(2^\frac{(1-p_h(C))\rho}{(1-\eta)}-1)N_{0}}{P_{m}^{tr}  B_mA_{\mathrm{NL}} r_{bh}^{-\alpha_{\mathrm{NL}}} }\right)$,
  $p_h=p_{h}(C)=\frac{\sum_{f=1}^{C}f^{-\gamma_p}}{\sum_{g=1}^{F}g^{-\gamma_p}}$ $P_s^{tr}=P'_s-w'_{\mathrm{ca}}C$.

  In the same way, from the general ASE expression of MBS tier in (\ref{MBSAreaSpectralEfficiency}), the  ASE  of MBS tier in noise-limited environment is
  \begin{small}
  \begin{align}\label{ASEMBSNoiseLimited}
   & \mathcal{A}_m^{\mathrm{Noi}}(\eta)=\lambda(\int_{0}^{\infty} \int_{0}^{\infty}\exp \left(\frac{-(2^{\frac{\rho}{\eta}}-1)N_{0}}{P_{m}^{tr}B_mA_{\mathrm{L}} r_m^{-\alpha_{\mathrm{L}}} }\right) F_m^{\mathrm{L}}(r_m)   +
   \exp \left(\frac{-(2^{\frac{\rho}{\eta}}-1) N_{0}}{P_{m}^{tr}B_m A_{\mathrm{NL}} r_m^{-\alpha_{\mathrm{NL}}} }\right)  F_m^{\mathrm{NL}}(r_m)\mathrm{d}\rho\mathrm{d}r_m)
  \end{align}
  \end{small}
  Note that, since the density of MBS tier is low, the transmission of the MBS tier can be considered as noise-limited.
\end{corollary}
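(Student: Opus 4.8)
The plan is to obtain Corollary~\ref{NoiseASE} by specializing the general ASE expressions of Corollary~\ref{ProBSAreaSpectralEfficiency} to the regime in which the aggregate interference is dominated by thermal noise. The structural starting point is twofold. By Proposition~\ref{ProRateCoverageDistribution}, every spectral-efficiency distribution $\mathbb{P}[R_{s,i}>\rho]$ (and $\mathbb{P}[R_{m,j}>\rho]$) is a product of SINR coverage probabilities evaluated at the access-link threshold $2^{\rho/\eta}-1$ and, for the SBS tier, the backhaul-link threshold $2^{(1-p_h)\rho/(1-\eta)}-1$. By Proposition~\ref{ProUserSINRCoverageProbability}, each such coverage probability factors as a noise exponential $\exp\left(-\gamma N_0/(P^{tr} B A r^{-\alpha})\right)$ times a Laplace transform $\mathcal{L}_{I}$ of the interference. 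The proof then amounts to tracking what happens to these two factors when the interference is removed.

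First I would impose the noise-limited condition $I_s=I_m=I'_s=I'_m=0$, which is motivated, exactly as in the discussion preceding the corollary, by the observation that at low SBS density the number of LoS interferers is small, so that both the intra-tier and the cross-tier interference may be neglected. Setting the interference to zero makes each Laplace factor $\mathcal{L}_I(s)=\mathbb{E}[e^{-sI}]$ collapse to $1$, so every SINR coverage probability reduces to its pure noise exponential.

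Next I would substitute the two relevant thresholds into those surviving exponentials. For a user served by the SBS tier, the access factor becomes $\exp\left(-(2^{\rho/\eta}-1)N_0/(P_s^{tr}B_s A_{\mathrm{L}} r_s^{-\alpha_{\mathrm{L}}})\right)$ in the LoS case (and its $\mathrm{NL}$ analogue otherwise), while the backhaul factor becomes the same expression with threshold $2^{(1-p_h)\rho/(1-\eta)}-1$ and with $P_m^{tr},B_m,r_{bh}$ in place of the access quantities. Multiplying the access and backhaul factors across the four LoS/NLoS combinations yields precisely the integrands $A_1,\dots,A_4$ of~(\ref{ASESBSNoiseLimited}); weighting each by the matching association densities $F_s^{\mathrm{L/NL}}(r_s)F_{bh}^{\mathrm{L/NL}}(r_{bh})$ and integrating over $\rho,r_s,r_{bh}$ reproduces $\mathcal{A}_s^{\mathrm{Noi}}(\eta,C)$. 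The MBS expression~(\ref{ASEMBSNoiseLimited}) follows by the identical substitution on the single access link, where the low MBS density makes the MBS tier noise-limited as well, so that only the two access-side exponentials remain. Finally, $\mathcal{A}^{\mathrm{Noi}}=\mathcal{A}_s^{\mathrm{Noi}}+\mathcal{A}_m^{\mathrm{Noi}}$ by the linearity of the decomposition in~(\ref{AreaSpectralEfficiency}).

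The step I expect to be the real crux is \emph{justifying} the noise-limited reduction, i.e. arguing that the interference Laplace transforms may legitimately be replaced by $1$. This is not an exact algebraic identity but an approximation that must be tied back to the low-density, sparse-LoS-interferer argument; making it quantitative, say by bounding the residual $1-\mathcal{L}_I$ in terms of $\lambda_s$, is the delicate part. By contrast, once the interference is dropped, the remaining substitution of the thresholds and the bookkeeping over the four LoS/NLoS cases and their path-loss constants are entirely routine.
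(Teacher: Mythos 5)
Your proposal is correct and follows essentially the same route as the paper: the paper's own (implicit) derivation is precisely to set $I_s=I_m=I'_s=I'_m=0$ in the general SBS-tier and MBS-tier ASE expressions of Corollary~\ref{ProBSAreaSpectralEfficiency}, so that each interference Laplace transform in Proposition~\ref{ProUserSINRCoverageProbability} collapses to $1$ and only the noise exponentials at the thresholds $2^{\rho/\eta}-1$ and $2^{(1-p_h)\rho/(1-\eta)}-1$ survive, exactly as you describe. Your closing concern about quantitatively justifying the approximation is reasonable but not addressed analytically in the paper either; the authors simply treat the noise-limited regime as a modeling assumption and verify numerically (as noted in the remark following the corollary) that the resulting upper bound is tight at low SBS density.
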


\begin{remark}
  Based on the SINR expression in (\ref{SBSUSERSINR}) , (\ref{MBSUSERSINR}) and (\ref{M-SBSSINR}), the SINR without interference  in the noise-limited mABHetNets is larger than the original one, the  approximated ASE in (\ref{NoiseASEExpression}) is actually a tighter upper bound of the original ASE in (\ref{AreaSpectralEfficiency}).We show later in the numerical results that ignoring the interference in   the noise-limited mABHetNets    introduces a negligible error.
\end{remark}

\subsection{A   Special Case: Interference-Limited}
When the density of SBS is very high, the LoS link based   desired signal and interference will become dominant\cite{InterferenceLimted,InterferenceLimted1}. At this time the NLoS based signal and interference and   the noise received at the typical user are usually omitted. Such cache-enabled mABHetNets is called  interference-limited one.  Then, in the interference-limited mABHetNets,   with  $N_0=0$ and ignored NLoS transmission in (\ref{SBSUSERSINR}) and (\ref{MBSUSERSINR}),
ASE  of SBS tier is given in below lemma. Since the interference-limited case exist in the denser SBS and not related to the MBS tier of lower density.
\begin{proposition}\label{InterferenceASE}
The ASE in the interference-limited mABHetNets is
\begin{equation}\label{InterferenceASEExpression}
\mathcal{A}^{\mathrm{Int}}(\eta,C)=\mathcal{A}_{s}^{\mathrm{Int}}(\eta,C)+\mathcal{A}_{m}^{\mathrm{Int}}(\eta)
\end{equation}
where $\mathcal{A}_{s}^{\mathrm{Int}}(\eta,C)$ and $\mathcal{A}_{m}^{\mathrm{Int}}(\eta)$ are the ASEs of SBS tier and MBS tier in the interference-limited case, respectively.

Based on the general SBS ASE in (\ref{SBSAreaSpectralEfficiency}), the SBS ASE in the interference-limited case is
\begin{align}\label{ASEinterferenceLimited}
  &\mathcal{A}_{s}^{\mathrm{Int}}(\eta,C)\\
  &=\lambda
  (\int_{0}^{\infty} \int_{0}^{\infty} \int_{0}^{\infty}  \mathcal{\overline{L}}_{I_{s,m}}^{\mathrm{L,int}}\left((2^{\frac{\rho}{\eta}}-1)r_s^{\alpha_{\mathrm{L}}}\right)  \mathbb{P}_{bh,\mathrm{L}}^{cov}(2^{\frac{(1-p_h)\rho}{(1-\eta) }}-1|r_{bh}) F_{s}^{\mathrm{L}}(r_s)F_{bh}^\mathrm{L}{(r_{bh})}\mathrm{d}\rho\mathrm{d}r_s\mathrm{d}r_{bh}\nonumber\\
  &+
  \int_{0}^{\infty} \int_{0}^{\infty} \int_{0}^{\infty}  \mathcal{\overline{L}}_{I_{s,m}}^{\mathrm{L,int}}\left((2^{\frac{\rho}{\eta}}-1)r_s^{\alpha_{\mathrm{L}}}\right)  \mathbb{P}_{bh,\mathrm{NL}}^{cov}(2^{\frac{(1-p_h)\rho}{(1-\eta) }}-1|r_{bh}) F_{s}^{\mathrm{L}}(r_s)F_{bh}^\mathrm{NL}{(r_{bh})}\mathrm{d}\rho\mathrm{d}r_s\mathrm{d}r_{bh}\nonumber
\end{align}
where $\mathcal{\overline{L}}_{I_{s,m}}^{\mathrm{L,int}}(\cdot)$  are the  Laplace transform of the cumulative   interference from the SBS tier in the interference-limited case. $\mathbb{P}_{bh,\mathrm{NL}}^{cov}(\cdot)$ and $\mathbb{P}_{bh,\mathrm{NL}}^{cov}(\cdot)$ are the SINR distributions in (\ref{BHSINRCoverageProbabilityLoS}) and (\ref{BHSINRCoverageProbabilityNLoS}), respectively.

It is noted that, the ASE of MBS in the interference-limited case is the same as  the general expression in (\ref{MBSAreaSpectralEfficiency}) since the MBS density is not changed. Namely,
\begin{align}
 \mathcal{A}_m^{\mathrm{Int}}(\eta)=\mathcal{A}_m(\eta).
 \end{align}
\end{proposition}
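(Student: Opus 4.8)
The plan is to obtain both tier expressions by specializing the general ASE formulas of Corollary~\ref{ProBSAreaSpectralEfficiency} under the two defining assumptions of the interference-limited regime: the thermal noise is negligible ($N_0=0$) on the dense access tier, and the NLoS access contributions are discarded because the dense SBS deployment makes LoS links dominant. The MBS tier, whose density is low and unchanged, is handled separately and requires essentially no work.

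First I would start from the four-case SBS ASE in (\ref{SBSAreaSpectralEfficiency}). Because only LoS access survives, the two cases whose access link is NLoS, namely the $R_{s,3}$ and $R_{s,4}$ terms, each carrying the factor $F_{s}^{\mathrm{NL}}(r_s)$, are set to zero. This leaves exactly the $R_{s,1}$ and $R_{s,2}$ terms (LoS access, with LoS or NLoS backhaul), which already produces the two-term structure of (\ref{ASEinterferenceLimited}).

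Next I would simplify the access-link factor inside $R_{s,1}$ and $R_{s,2}$. From Proposition~\ref{ProRateCoverageDistribution} each of these splits as a product of an access coverage term $\mathbb{P}_{s,\mathrm{L}}^{cov}(2^{\rho/\eta}-1\,|\,r_s)$ and a backhaul coverage term. Recalling from Proposition~\ref{ProUserSINRCoverageProbability} that the conditional access coverage is $\exp\!\left(-\gamma N_0/(P_s^{tr}B_s A_{\mathrm{L}} r_s^{-\alpha_{\mathrm{L}}})\right)$ times the interference Laplace transform $\mathcal{L}_{I_{s,m}}^{\mathrm{L}}(\gamma r_s^{\alpha_{\mathrm{L}}})$, setting $N_0=0$ makes the exponential equal to one, so the access coverage collapses to the interference-only Laplace transform; retaining only LoS interference this is precisely $\overline{\mathcal{L}}_{I_{s,m}}^{\mathrm{L,int}}\!\left((2^{\rho/\eta}-1)r_s^{\alpha_{\mathrm{L}}}\right)$ after the substitution $\gamma=2^{\rho/\eta}-1$. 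The backhaul coverage terms $\mathbb{P}_{bh,\mathrm{L}}^{cov}$ and $\mathbb{P}_{bh,\mathrm{NL}}^{cov}$ are left untouched, since the sparse MBS tier serving the backhaul is not in the dense interference-limited regime. Collecting the two surviving cases then reproduces (\ref{ASEinterferenceLimited}).

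Finally, for the MBS tier I would simply note that none of the interference-limited assumptions apply to it: its density is fixed and low, so the MBS-tier spectral-efficiency distributions in Proposition~\ref{ProRateCoverageDistribution} and the resulting expression (\ref{MBSAreaSpectralEfficiency}) are unchanged, giving $\mathcal{A}_m^{\mathrm{Int}}(\eta)=\mathcal{A}_m(\eta)$ at once; summing the two tiers yields (\ref{InterferenceASEExpression}). The main obstacle is conceptual rather than computational: one must treat the two links asymmetrically, applying the $N_0=0$ and LoS-only simplification only to the access SINR while keeping the full backhaul coverage, and verify that discarding NLoS access is exactly what removes the $F_{s}^{\mathrm{NL}}$ terms and reduces $\mathcal{L}_{I_{s,m}}^{\mathrm{L}}$ to its interference-only counterpart $\overline{\mathcal{L}}_{I_{s,m}}^{\mathrm{L,int}}$.
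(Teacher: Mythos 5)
Your proposal is correct and follows essentially the same route as the paper's own proof: both specialize the general SBS ASE in (\ref{SBSAreaSpectralEfficiency}) by discarding the NLoS-access cases (the $F_{s}^{\mathrm{NL}}$ terms), reducing the conditional access coverage to the LoS interference-only Laplace transform $\mathcal{\overline{L}}_{I_{s,m}}^{\mathrm{L,int}}$, and leaving the backhaul coverage factors and the MBS tier untouched. The only cosmetic difference is that the paper motivates LoS dominance via $\lambda_s\rightarrow\infty$ (so $\mathcal{P}_{\mathrm{L}}(r)=1$) and writes out the approximated Laplace transform explicitly, whereas you invoke the stated assumptions ($N_0=0$, NLoS ignored) directly; the substance of the argument is identical.
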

\begin{proof}
  The detailed proof can be found in Appendix \ref{Appen5:Inteference}.
\end{proof}

\begin{remark}
  Based on the SINR expression in (\ref{SBSUSERSINR}) , (\ref{MBSUSERSINR}) and (\ref{M-SBSSINR}), SINR with no NLoS interference is larger than the original one,    the  approximated ASE in (\ref{InterferenceASEExpression})another tighter upper bound of the original ASE in (\ref{AreaSpectralEfficiency}). We show later in the numerical
results that ignoring the NLoS interference and noise in  mABHetNets introduces a negligible
error.
\end{remark}




\section{numerical  results}
In this section, we use numerical results to validate and evaluate of APT and   ASE of the cache-enabled mHetNets. We further   study   APT and ASE under different network scenarios and cache parameters.

\subsection{Parameter Setting}
%
The density of the SBS $\lambda_s$ and the MBS $\lambda_m$ are $10^{-4}$BSs/m$^2$  and $10^{-5}$BSs/m$^2$. Note that  the density of the users is assumed to be sufficiently larger than that of the BS so that each BS has at least one associated user in its coverage. The density of the users $\lambda_s=3\times10^{-4}$users/m$^2$. The Zipf distribution parameter $\gamma_p$ of file popularity  is 0.6\cite{Zipf}. Based on   \cite{BSPowerModel1}, we can assume that each file unit has the same size of 4MB. The number of files in the file library is 1000 file units. The cache capacity of SBS    is 100 file units. To reflect the caching power model, we adopt the caching power coefficient $\omega_{ca}$ which is $2.5\times10^{-9}$W/bit\cite{BSPowerModel}. According to the simulation requirement, the total power of SBS and MBS is set as 9.1W and 610W to maintain the   transmission power consumption and caching power consumption. Other default simulation configurations are listed in Table \ref{parameters}, based on 3GPP specification and literatures\cite{Partition3,3GPP:Spec,PathLoss,FixPower,SBSPowerScaling}. All the above settings will be changed according to different scenarios.
\renewcommand\arraystretch{0.7}
\begin{table}[h]
\small
\centering
\caption{Simulation parameters}
\label{parameters}
 \begin{tabular}{cp{2cm}c}
 \hline
 \hline
 Parameters                                                   &&  Values\\
 \hline
 Total mmWave spectrum bandwidth $W$                              &&400 MHz\\
 LoS pathloss parameters $A_\mathrm{L},\alpha_\mathrm{L}$                     &&$10^{-10.38},~2.09$\\
 NLoS pathloss parameters $A_{\mathrm{NL}},\alpha_{\mathrm{NL}}$              &&$10^{-14.54},~3.75$\\
 Noise Power    $N_0$                                               &&5 dB\\
 Fixed circuit power at MBS                                         &&10.16W\\
 Fixed circuit power at SBS                                         &&0.1W\\
 Power amplifier and cooling coefficient   $\rho_s$and $\rho_m$                 &&4,~15.13\\
 Association biases of SBS and MBS $B_s$ and $B_m$                              &&10,~1\\
 Blockage rate $\beta$                                              &&2.7$\times$10$^{-2}$\\
 \hline
 \hline
 \end{tabular}
\end{table}

\subsection{APT of mABHetNets}
\begin{figure}[ht]
\setlength{\abovecaptionskip}{0.cm}

\setlength{\belowcaptionskip}{-0.cm}
  \centering
  \subfigure[]{\includegraphics[width=2.1 in]{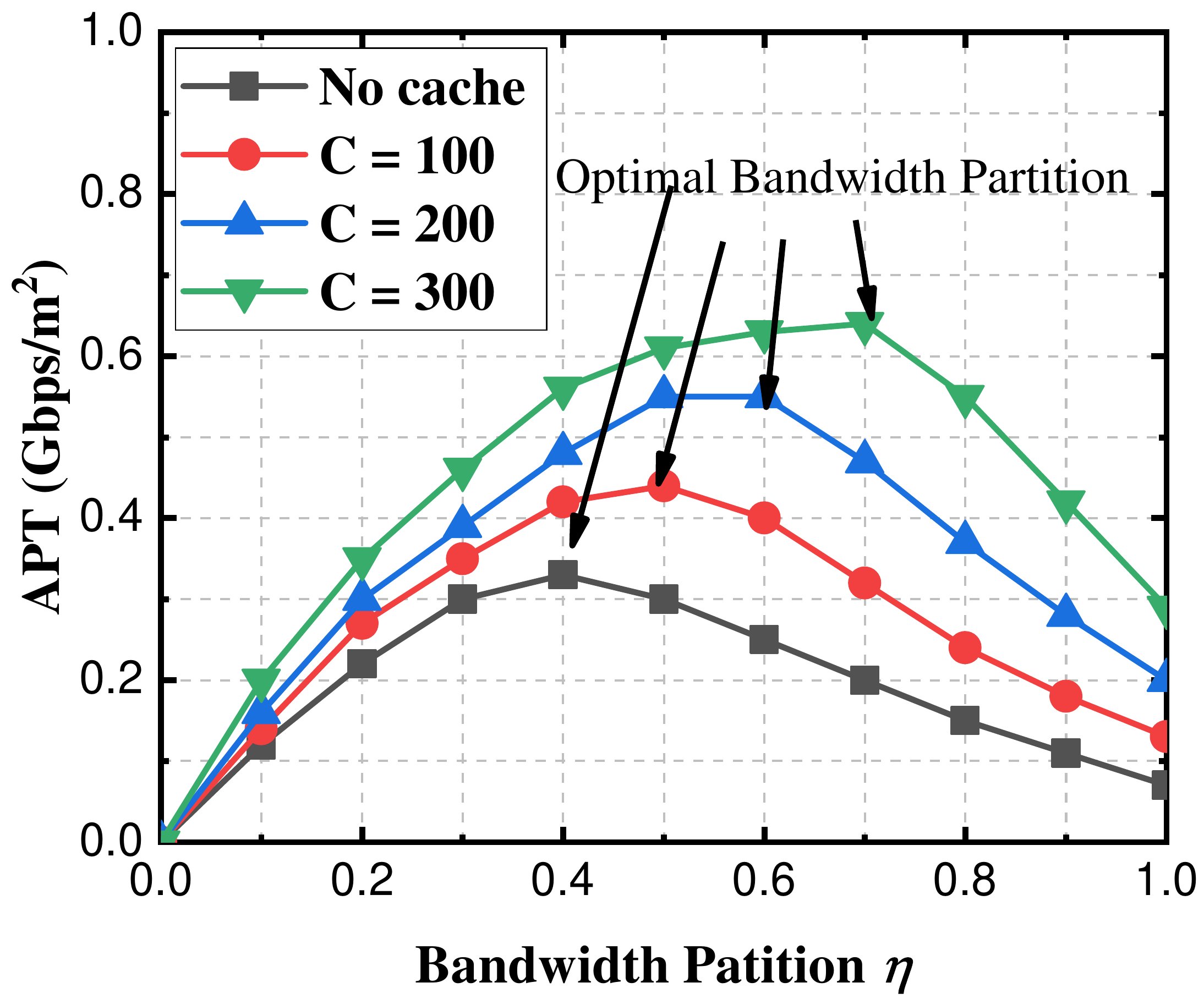}\label{APTEta}}
  \subfigure[]{\includegraphics[width=2.1 in]{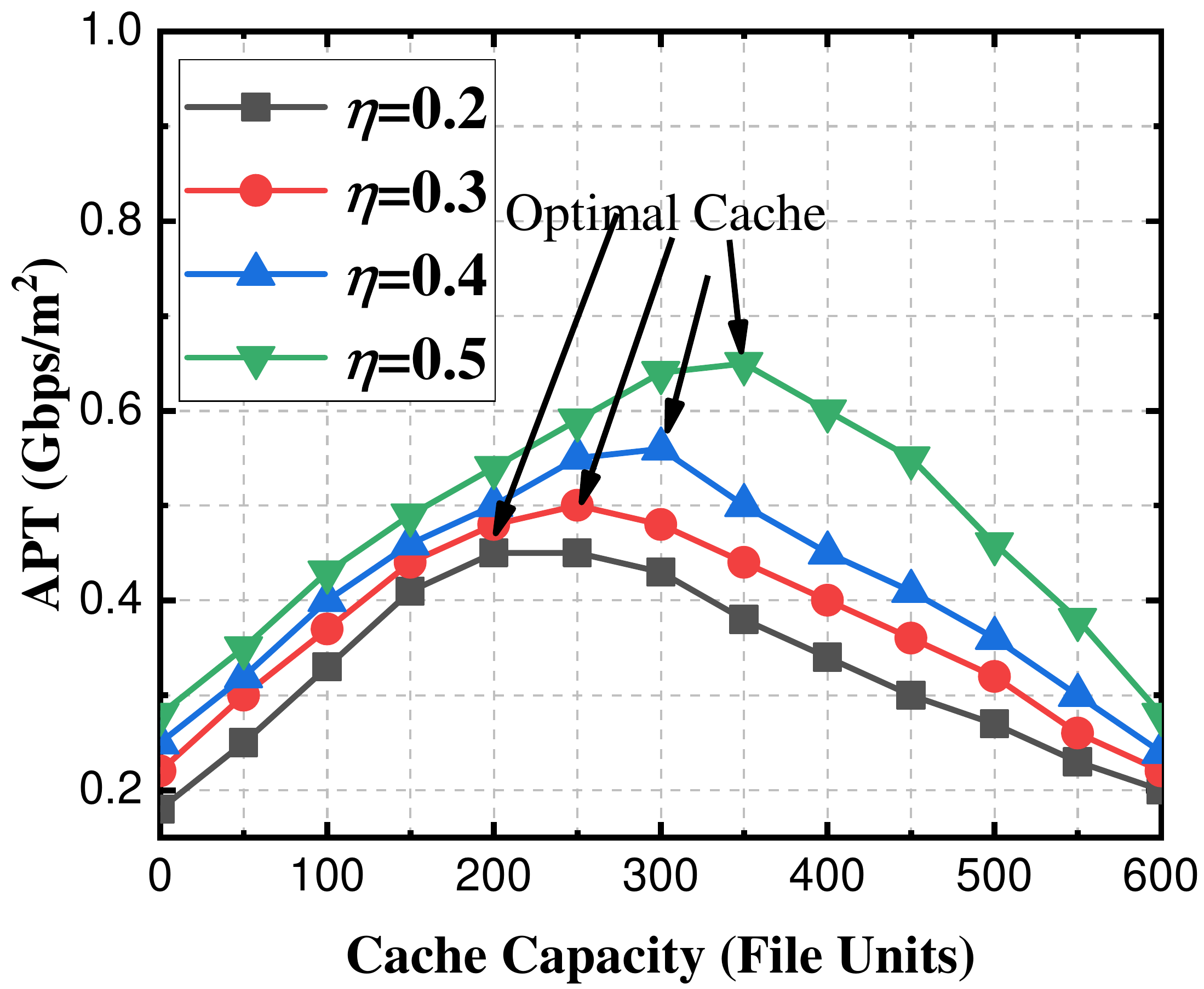}\label{APTCache}}
  \subfigure[]{\includegraphics[width=2.05 in]{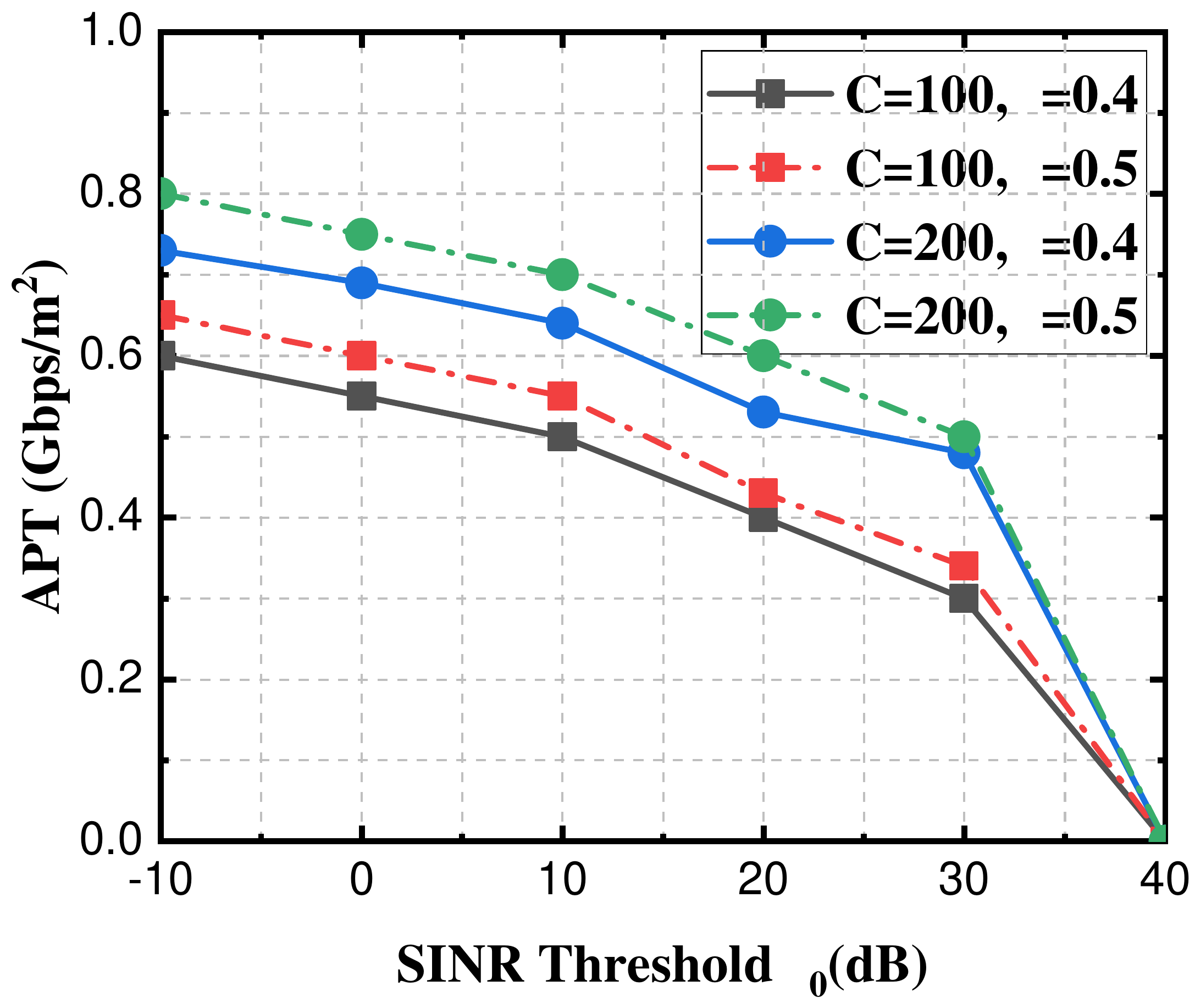}\label{APTGamma}}
  \captionsetup{font={small}}
  \caption{APT of the mABHetNets under different (a)bandwidth partitions( $\gamma_0$ is 10dB), (b)cache capacities($\gamma_0$ is 10dB) and (c)SINR thresholds.}
  \label{APT}
\end{figure}
To verity the APT performance in cache-enabled mHeNets, APTs under different network parameter or caching parameters are given. In Fig. \ref{APTEta},   APT will increase as the bandwidth partition increases and then it decreases. That means there exist the optimal bandwidth in APT. This is because, when the backhaul spectrum bandwidth is enough, transferring some bandwidth to the  access can increase the throughput of the user. Howerver, when more bandwidth is used in access, the backhaul link throughput cannot maintain the backhaul of the access throughput and the total throughput is reduced. In Fig. \ref{APTCache}, APT also increase with the increasing cache capacity, when the backhaul throughput  is limited with lower backhual bandwidth. As more cache files can improver the cache hit ratio of SBS and   less files will  use backhaul resource. Then more files can be obtained by access without  bandwidth  and the ASE is improved. However, when the cache capacity is over 600, APT is zero. Such result is because that the maximum  power of SBS is limited and more cache capacity consumes more power and the transmission power is reduced. The reduced transmission power will decrease the data rate and APT. Since the APT is related to the user rate requirement, APT is shown under different SINR thresholds in Fig. \ref{APTGamma}. High SINR threshold will decrease APT of user. This is due to the fact that  the path loss and fading make the received power lower and the received SINR lower than the threshold. However, at the same SINR threshold, more cache  capacity and more bandwidth partition can cause more APT.

\subsection{ASE Performance under bandwidth partition and Cache Capacity}
\begin{figure}[ht]
  \centering
  \subfigure[]{\includegraphics[width=2.5 in]{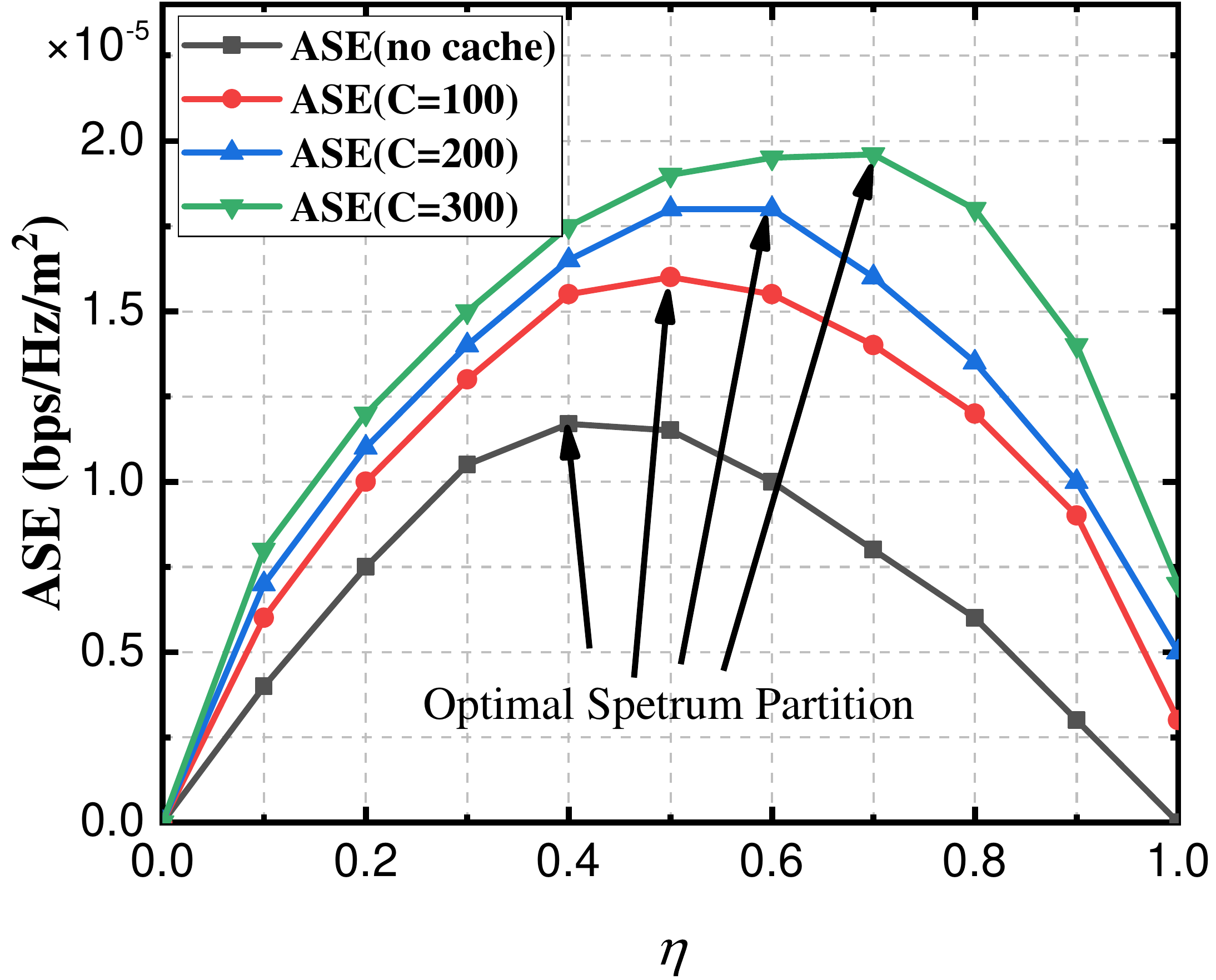}\label{ASEPartition}}
  \subfigure[]{\includegraphics[width=2.5 in]{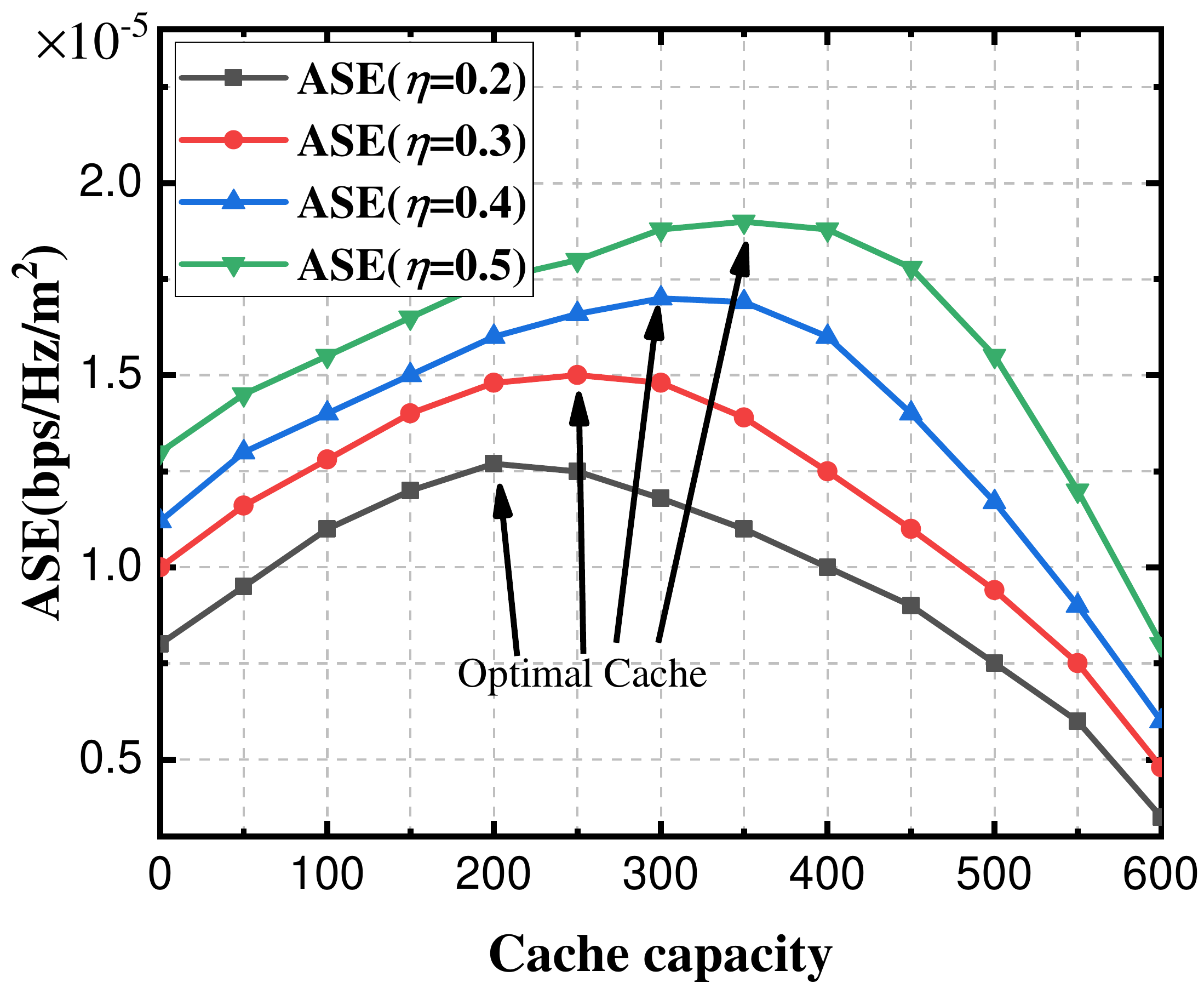}\label{ASECache}}
  \captionsetup{font={small}}
  \caption{ASE of the mABHetNets under different bandwidth partition (a) and file cache.}
  \label{EECache}
\end{figure}
To verify the impact of the bandwidth partition on ASE, we changed the bandwidth partition $\eta$ under  different the cache capacity in SBS. In Fig. \ref{ASEPartition}, we can see that, the ASE of mABHetNets first increases as the more spectrum resource is allocated to the access link including SBS and MBS. This is because the backhaul resource is enough and can be shifted to the access backhaul to improve the ASE. However, when less spectrum is used in backhaul link, the wireless backhaul link rate becomes a bottleneck and the ASE decreases. When $C = 300$, the ASE gain of caching  over not caching is about 200\%. The corresponding optimal $\eta=0.4$ is improved to 0.7 with 75\% gain.

Besides, to verify the impact of the number of the cached files, we changed the cache $C$ under  different bandwidth partition  Fig. \ref{ASECache}. ASE can be increased when the cache capacity is increased. This is because when the backhaul bandwidth is limted, the backhaul becomes the bottleneck of the file delivery. With more cache capacity, the cache hit ratio is increased, and then more files can be sent to users directly and the impact of backhaul is reduced.
However, more cache capacity consumes more power and reduce the transmission power, which reduce the ASE.

\begin{figure}[H]
  \centering
   \subfigure[]{\includegraphics[width=2.5in]{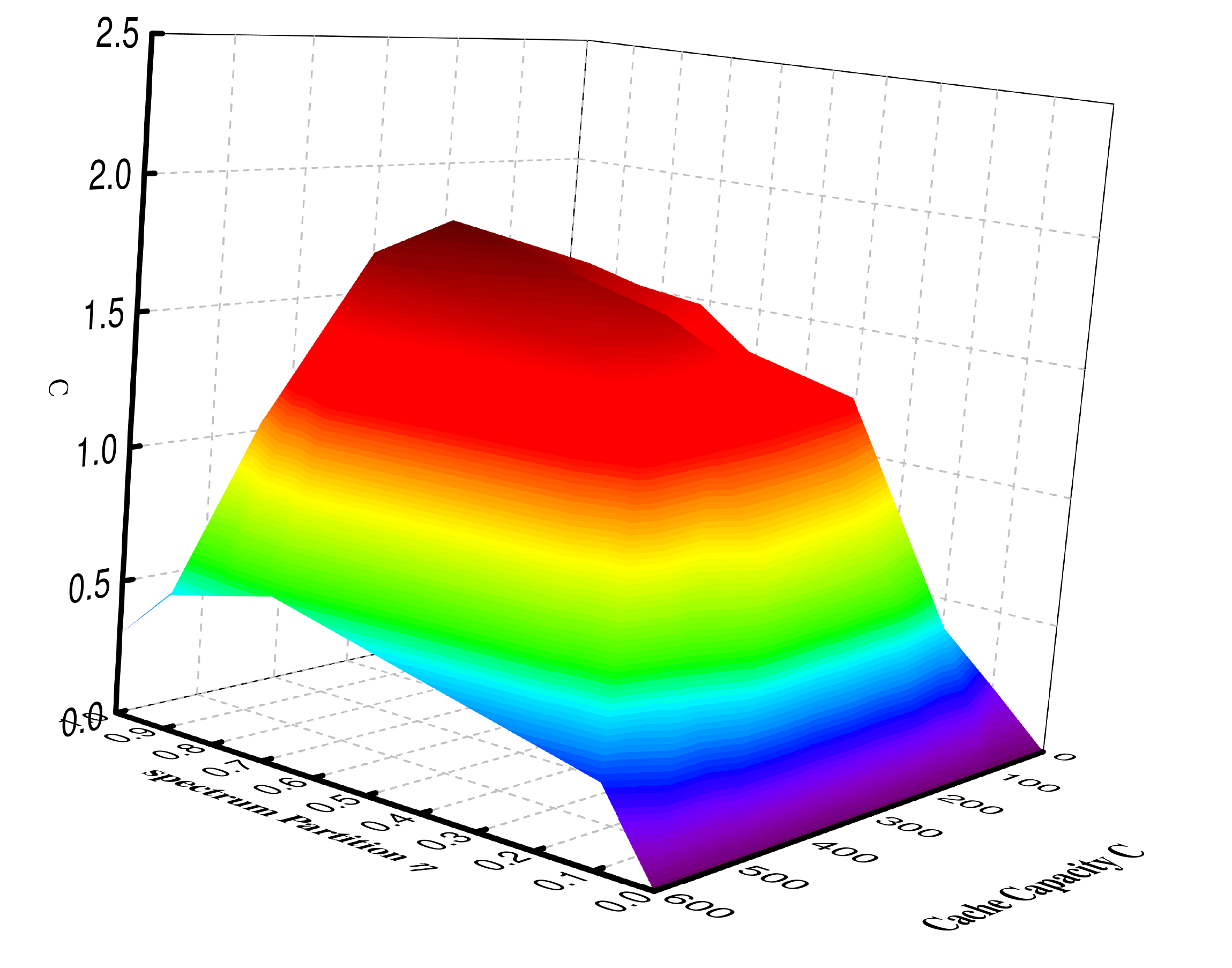}\label{ASEEtaCache2D}}
   \subfigure[]{\includegraphics[width=2.5in]{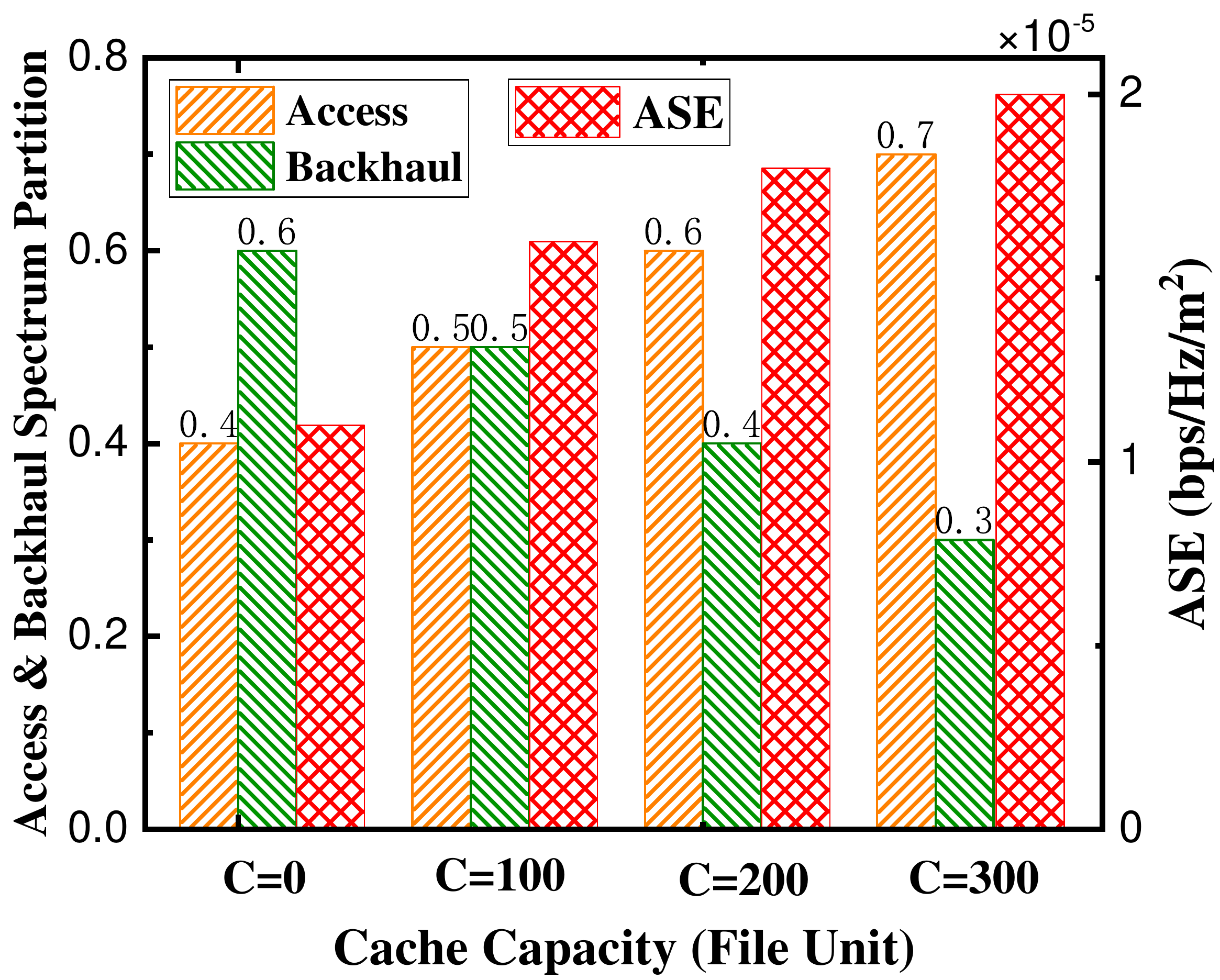}\label{ASEEtaCaDetail}}
   \captionsetup{font={small}}
  \caption{The ASE of the mABHetNets with cache capacity and bandwidth partition.}
  \label{ASEEtaCacheDetail}
\end{figure}
To further look into the joint impacts of bandwidth partition and cache capacity on the ASE, we show the 3-dimensional numerical results of ASE in Fig. \ref{ASEEtaCache2D}. Actually, there exist the optimal cache capacity and the optimal bandwidth. under some cases, cache capacity can improve ASE apparently. In Fig. \ref{ASEEtaCaDetail}, under the small cache capacity, more cache capacities cause  more access  bandwidth (over optimized bandwidth partition) and increase ASE.

\subsection{the impact of cache on $\Delta \eta$}
\begin{figure}[H]
  \centering
  \includegraphics[width=6cm]{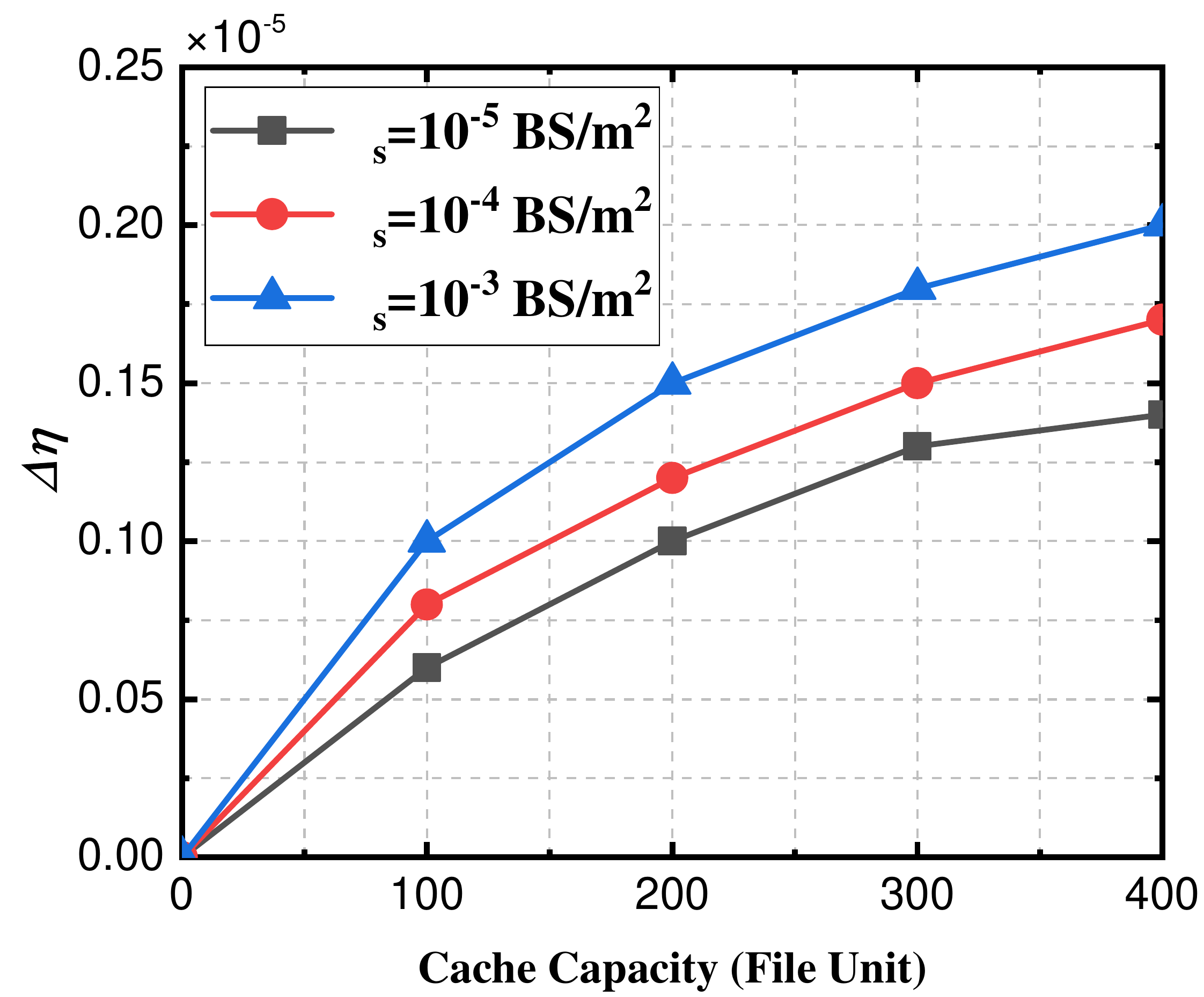}
  \caption{ Saved spectrum  for access service  under  different cache capacities.}
  \label{DeltaEtaCache}
\end{figure}
From the figure, when the cache capacity is larger, more bandwidth will be used in the access link compared with the uncached case in traditional mABHetNets. For a give cache capacity, by adjusting the bandwidth partition for access and backhaul link, the optimal ASE will be obtained. Compared with the traditional mABHetNets, more spectrum resource will be used for access link. From Fig. \ref{DeltaEtaCache}, we can see that, when 400 files are cached in SBS, over 20\% spectrum  resource is saved from the backhaul link to the access link to improve the ASE.

\subsection{Impact of Other Key Cache Parameters on ASE}
\begin{figure}[H]
  \centering
  \subfigure[]{\includegraphics[width=2.2 in]{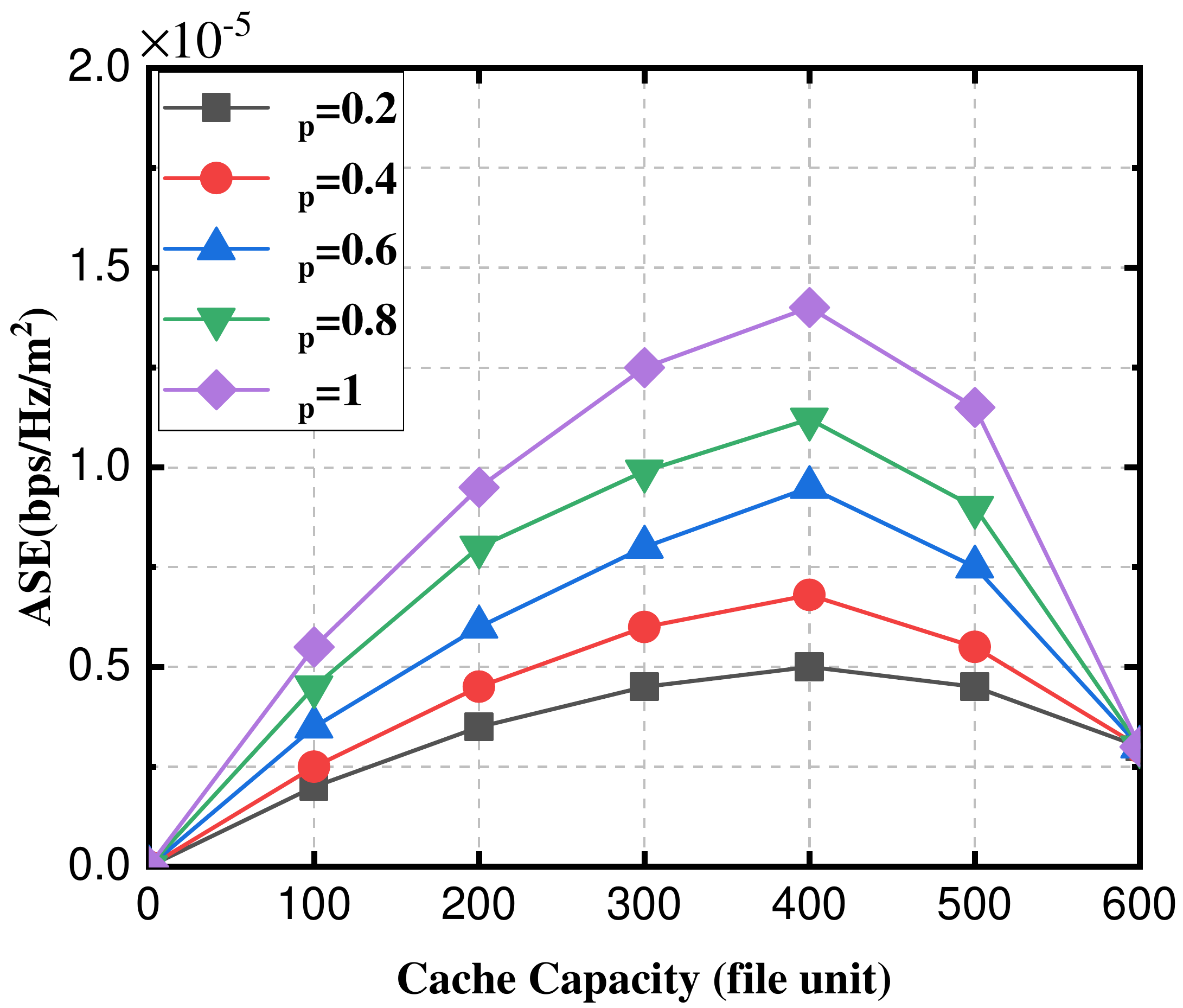}\label{ASEGammapCache}}
  \subfigure[]{\includegraphics[width=2.2 in]{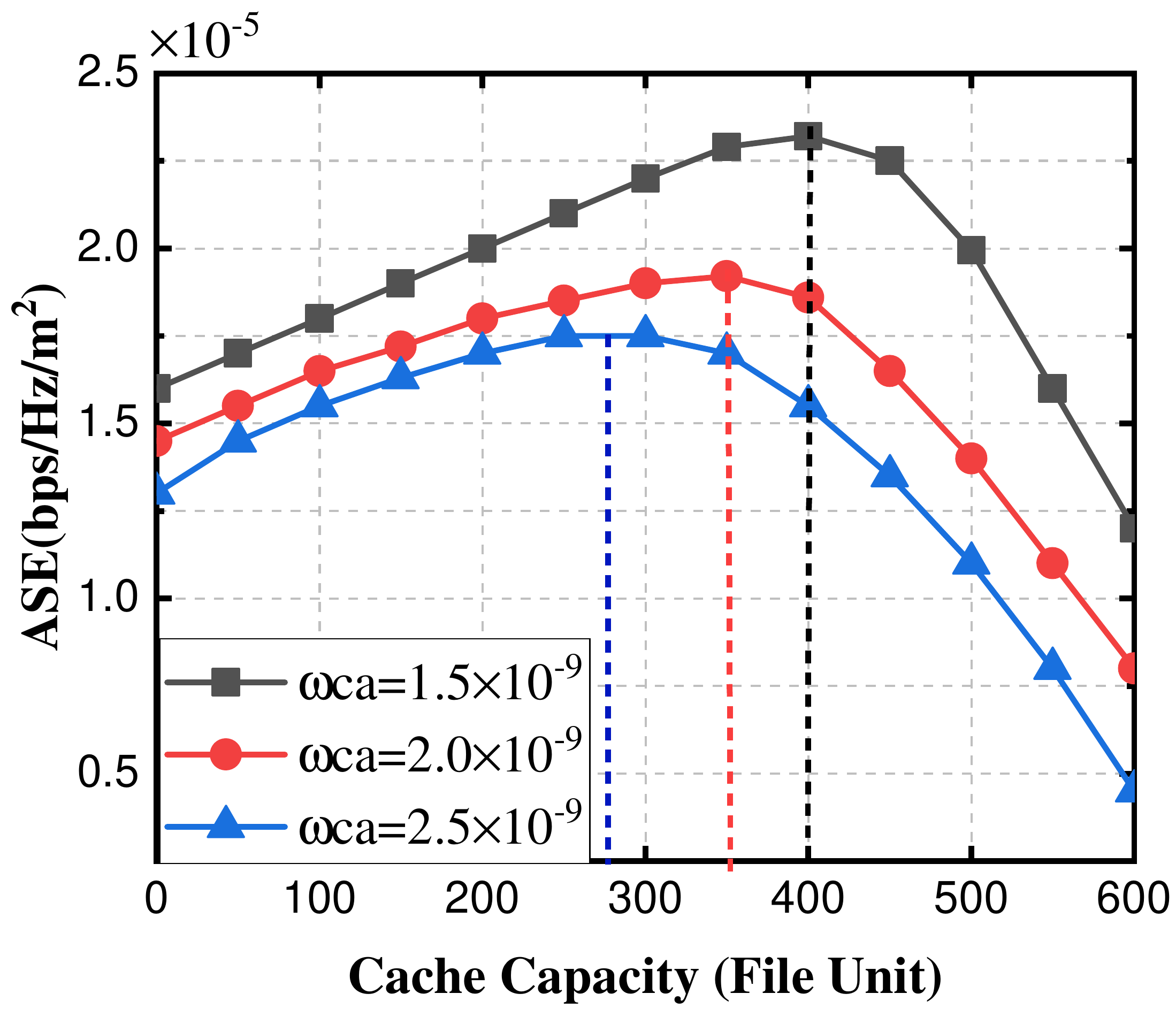}\label{ASEWcaCache}}
  \caption{Impact of cache parameter on ASE}
  \label{EECache}
\end{figure}

We want to further observe the ASE of cache-enabled mABHetNets under the key cache parameters. Both the Zipf distribution parameter $\gamma_p$ and the caching power coefficient $\omega_{ca}$ reflect the characteristics of cache-enabled mABHetNets.
In \ref{ASEGammapCache}, we show  ASE   versus the   cache capacity with different Zipf parameter $\gamma_p$. All the  results are based on the optimal bandwidth partition. We can see that the optimal cache capacity increases with increasing $\gamma_p$. With the same cache capacity, ASE increases with $\gamma_p$.    This is because the cache hit ratio $p_h$ increases with $\gamma_p$ as shown in (\ref{HitRatio}) and those cached files own a higher popularity.  The optimal AES with  optimized cache capacity in $\gamma_p$ = 1 over $\gamma_p$ = 0.2  is about 300\%.

While there are various kinds of memory technologies, we consider the three kinds that are most likely employed due to their higher power efficiencies and larger cache sizes. In Fig. \ref{ASEWcaCache}, we show the numerical results of ASE versus cache capacity under different caching power coefficients. The cache power coefficients $w_{ca}$ has an deep impact on the ASE. Low coefficient can improve the ASE over the optimized cache capacity.   This is because when the $w_{ca}$ is lower, more files can be cached with the same  power  overhead and the SBS cache hit ratio is improved. Then more files can be obtained from SBS directly and backhaul spectrum can be transferred to the  access link.

\subsection{APT and ASE}
\begin{figure}[H]
  \centering
  \includegraphics[width=6cm]{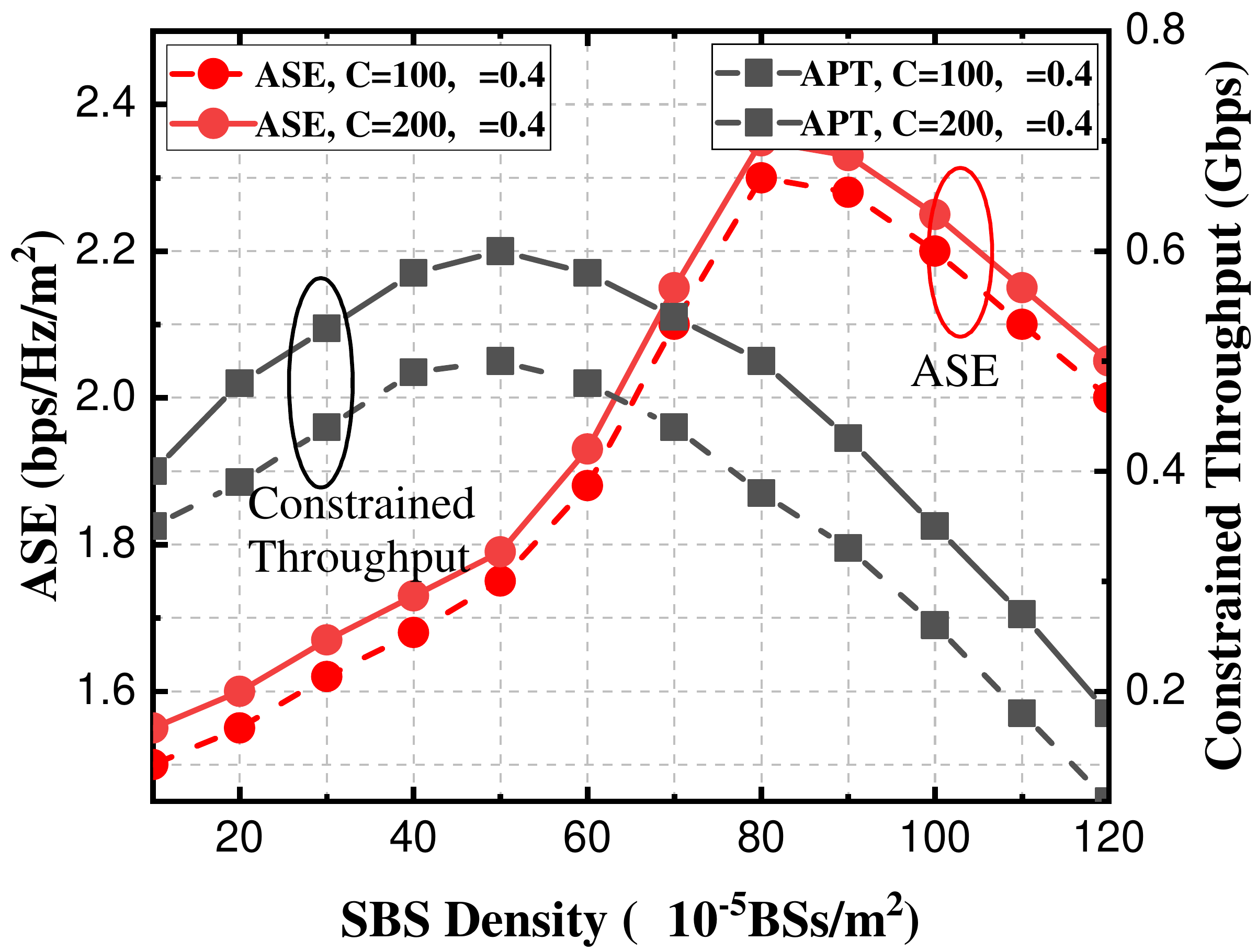}
  \caption{APT(SINR requirement $\gamma_0=5dB$) and  ASE   under  different SBS densities.}
  \label{ASEThrDen}
\end{figure}
To reflect APT and ASE under different SBS densities, we give Fig. \ref{ASEThrDen}. From the Fig.\ref{ASEThrDen}, we can see that both APT and ASE will increase with the increasing SBS density and then decrease. This is because when the when the BS density   increases, LoS transmission exists with an increasingly higher probability than NLoS transmission.  Therefore, the desired LoS signal from the associated BS is dominant. However, when the SBS becomes much
denser, the interference power are LoS dominated and the data rate decrease. Compared with ASE, APT begins to decreases at a smaller density. It is due to the fact that SINR requirement is more vulnerable to the LoS interference. When LoS interference increases with the increasing density,  it is more difficult to satisfy the SINR requirement $\gamma_0$ of APT. When the cache capacity increases a little(e.e., from 100 to 200), both APT and ASE increase. This is because that, with more cache capacity, more files are sent directly from SBS and backhaul traffic is reduced. Then, the backhaul spectrum can be shifted to access and data rate is increased.

\subsection{The Noise-limited Case and Interference-limited Case}
\begin{figure}[H]
  \centering
  \includegraphics[width=6cm]{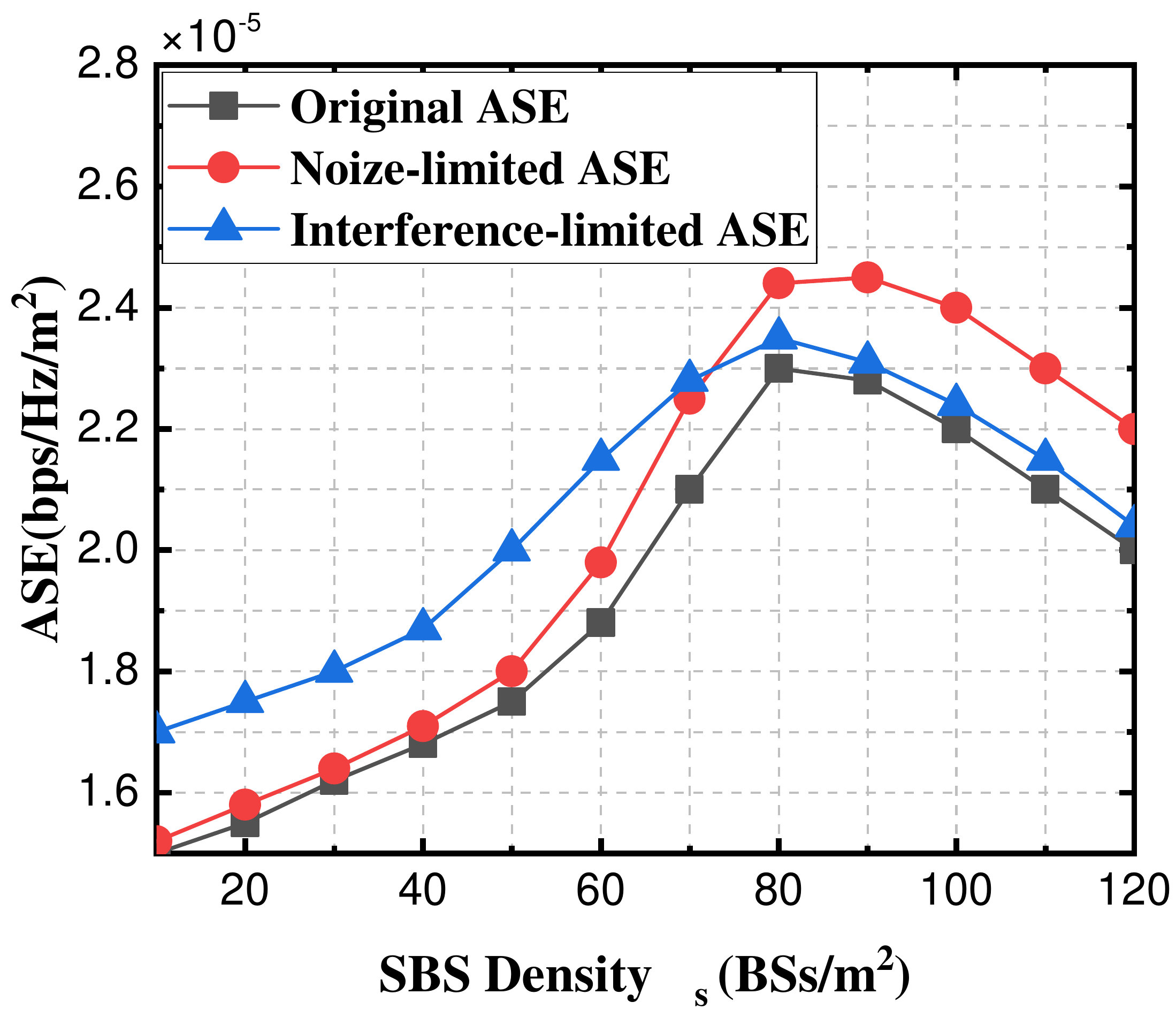}
  \captionsetup{font={small}}
  \captionsetup{font={normalsize}}
  \caption{Saved spectrum  for access service  under  different cache capacities.}
  \label{ASENoiInt}
\end{figure}
To verify the validity  of the ASE approximations  in noise-limited case  and interference-limited case, we compare the numerical results of  original ASE expression in (\ref{AreaSpectralEfficiency}) with those of noise-limited ASE (\ref{NoiseASEExpression}) and of interference-limited ASE (\ref{InterferenceASEExpression}), respectively. under the lower SBS density, the ASE expression (\ref{NoiseASEExpression}) is a more effective  upper bound of the original ASE expression in (\ref{AreaSpectralEfficiency}). That shows  the simpler expression  (\ref{NoiseASEExpression}) capture the feather that  lower density make the mmWave signal more vulnerable to the noise. Besides, in the high density case, the ASE expression (\ref{InterferenceASEExpression}) is a much closer to   the original ASE expression in (\ref{AreaSpectralEfficiency}). That results show the effectiveness of the simpler ASE expression  (\ref{NoiseASEExpression}), which mainly focus on the LoS based mmWave transmission.

\section{Appendix}

\subsection{The proof of Lemma \ref{LemmaUserClosestSBSProbabilty}}\label{Appe1:NearDistance}
We first consider the event that the    distance between the typical user  and  the nearest LoS SBS   (LoS based file transmission between the typical user and the SBS)   is $r$.
In fact,   the   event that is the joint of following two events:
The first event is  the nearest SBS of the typical user is located at distance $r$ (Event 1) and the second event is the transmission path between the typical user and the serving SBS is an LoS path (Event 2).
According to \cite{ReferUserClosestBSProbabilty}, the PDF of Event 1 with regard to $r$ is given by $\exp \left(-\pi r^{2} \lambda_{s}\right) \times 2 \pi r \lambda_{s}$. The probability of Event 2 over distance $r$  is $\mathcal{P}_L(r)$, so that we can get the PDF of the joint Event 1   and Event 2 as
\begin{align}
f_{R_{s}}^{\mathrm{L}}(r)= \mathcal{P}_{L}(r)\times \exp \left(-\pi r^{2} \lambda_{s}\right) \times 2 \pi r \lambda_{s} \label{RsLOS},
\end{align}
In a similar way, the PDF of   the event that the    distance between the typical user  and  the nearest NLoS SBS   is $r$ is
\begin{align}
f_{R_{s}}^{\mathrm{NL}}(r)=\mathcal{P}_{\mathrm{NL}}(r)   \times\exp \left(-\pi r^{2} \lambda_{s}\right) \times 2 \pi r \lambda_{s} \label{RsNLoS},
\end{align}
The PDFs of   the events that the    distance between the user  and  the nearest LoS MBS (NLoS MBS )is $r$  are
\begin{align}
f_{R_{m}}^{\mathrm{L}}(r)&=\mathcal{P}_{\mathrm{L}}(r)   \times\exp \left(-\pi r^{2} \lambda_{m}\right) \times 2 \pi r \lambda_{m} \label{RmLOS},\\
f_{R_{m}}^{\mathrm{NL}}(r)&=\mathcal{P}_{\mathrm{NL}}(r)   \times\exp \left(-\pi r^{2} \lambda_{m}\right) \times 2 \pi r \lambda_{m} \label{RmNLoS}
\end{align}

\subsection{The proof of Lemma \ref{LemmaUserAssociatedSBS}}\label{Appen2:AssoPro}
The typical user may associate with the SBS tier by either LoS channel or NLoS channel.
We derive the probability of the first event that the user is associated with the SBS by the wireless LoS  link. Such event has three cases: the interference from the NLoS SBS, the interference from the  LoS MBS and the interference from the NLoS MBS. Therefore, the probability that the user obtain the desired LoS signal from the SBS is $F_{s}^{L}(r)=p_{ln}^{ss}(r)p_{ll}^{sm}(r)p_{ln}^{sm}(r)f_{R_s}^{\mathrm{L}}(r)$ where
\begin{enumerate}
    \item   The user is associated with the LoS SBS and the interference is from NLoS SBS.
            \begin{align}\label{LSBSNLSBS}
             p_{ln}^{ss}(r)
             &=\mathbb{P}[P_{s}^{tr}  B_sh_{s}  A_\mathrm{L} r^{-\alpha_L} \geq P_{s}^{tr}  B_sh_{s} A_{\mathrm{NL}} r_{s}^{-\alpha_{\mathrm{NL}}}]\\
             &=\mathbb{P}\left[ r_{s}\geq\left(\frac{A_\mathrm{L}}{A_\mathrm{NL}}\right)^{\frac{-1}{\alpha_{NL}}} r^{\frac{\alpha_{L}}{\alpha_{NL}}} \right]
             =e^{-\lambda_s\pi\left[ \left(\frac{A_\mathrm{L}}{A_\mathrm{NL}}\right)^{\frac{-1}{\alpha_{NL}}} r^{\frac{\alpha_{L}}{\alpha_{NL}}}\right]^{2}}
            \end{align}
            where the last step is based on the derivation in \cite{ReferUserClosestBSProbabilty}.
    \item   The user is associated with the LoS SBS and the interference is from LoS MBS.
    \begin{align}\label{LSBSLMBS}
             p_{ll}^{sm}(r)
             &=\mathbb{P}[P_{s}^{tr} h_{s}  A_\mathrm{L} r^{-\alpha_L} \geq P_{m}^{tr} B_{m}h_m A_{\mathrm{L}} r_{m}^{-\alpha_L}]\\
             &=\mathbb{P}\left[ r_{m}\geq\left(\frac{P_s^{tr}B_{s} h_s}{P_m^{tr}B_{m} h_m}\right)^{\frac{-1}{\alpha_{L}}} r \right]
             =e^{-\lambda_m\pi\left[\left(\frac{P_s^{tr} B_{s}h_s}{P_m^{tr} B_{m} h_m}\right)^{\frac{-1}{\alpha_{L}}} r\right]^{2}}
      \end{align}
    \item   The user is associated with the LoS SBS and the interference is from NLoS MBS.
      \begin{align}\label{LSBSLMBS}
             p_{ln}^{sm}(r)
             &=\mathbb{P}[P_{s}^{tr}B_{s} h_{s}  A_\mathrm{L} r^{-\alpha_L} \geq P_{m}^{tr} B_{m}h_{m} A_{\mathrm{NL}} r_{m}^{-\alpha_{\mathrm{NL}}}]\\
             &=\mathbb{P}\left[ r_{m}\geq\left(\frac{P_s^{tr} B_{s}h_s A_\mathrm{L}}{P_m^{tr} B_{m}h_m A_\mathrm{NL}}\right)^{\frac{-1}{\alpha_{\mathrm{NL}}}} r^{\frac{\alpha_{\mathrm{L}}}{\alpha_{\mathrm{NL}}}} \right]
             =e^{-\lambda_m\pi\left[\left(\frac{P_s^{tr} B_{s}h_s A_\mathrm{L}}{P_m^{tr} B_{m}h_m A_\mathrm{NL}}\right)^{\frac{-1}{\alpha_{\mathrm{NL}}}} r^\frac{\alpha_{\mathrm{L}}}{\alpha_{\mathrm{NL}}}\right]^{2}}
      \end{align}
\end{enumerate}

Then,  the probability that the user obtain the desired NLoS signal from the SBS is $F_{s}^{NL}(r)=p_{nl}^{ss}(r)$ $ p_{nl}^{sm}(r) p_{nn}^{sm}(r)f_{R_s}^{\mathrm{NL}}(r)$ where
\begin{enumerate}
    \item   The user is associated with the NLoS SBS and the interference is from LoS SBS.
            \begin{align}\label{NLSBSLSBS}
             p_{nl}^{ss}(r)&= P[P_s^{tr}B_{s}h_sA_{NL}r^{-\alpha_{NL}}\geq P_s^{tr}B_{s}h_sA_{L}r_s^{-\alpha_L}]
             =e^{-\lambda_s\pi\left[ \left(\frac{A_\mathrm{NL}}{A_\mathrm{L}}\right)^{\frac{-1}{\alpha_{L}}} r^{\frac{\alpha_{NL}}{\alpha_{L}}}\right]^{2}}
            \end{align}
    \item   The user is associated with the NLoS SBS and the interference is from LoS MBS.
    \begin{align}\label{NLSBSLMBS}
             p_{nl}^{sm}(r)&=P[P_s^{tr}B_{s}h_sA_{NL}r^{-\alpha_{NL}}\geq P_m^{tr}B_{m}h_mA_Lr_m^{-\alpha_L}]
             =e^{-\lambda_m\pi\left[\left(\frac{P_s^{tr} B_{s}h_s A_\mathrm{NL}}{P_m^{tr} B_{m}h_m A_{\mathrm{L}}}\right)^{\frac{-1}{\alpha_\mathrm{L}}} r^\frac{\alpha_\mathrm{NL}}{\alpha_\mathrm{L}}\right]^{2}}
      \end{align}
    \item   The user is associated with the NLoS SBS and the interference is from NLoS MBS.
      \begin{align}\label{NLSBSNLMBS}
             p_{nn}^{sm}(r)&=P[P_s^{tr}B_{s}h_sA_{NL}r^{-\alpha_{NL}}\geq P_m^{tr}B_{m}h_mA_{NL}r_m^{-\alpha_{NL}}]
             =e^{-\lambda_m\pi\left[\left(\frac{P_s^{tr}B_{s} h_s}{P_m^{tr} B_{m} h_m}\right)^{\frac{-1}{\alpha_{\mathrm{NL}}}} r\right]^{2}}
      \end{align}
\end{enumerate}

Then,  the probability that the user obtain the desired LoS signal from the MBS is $F_{m}^{L}(r)=p_{1n}^{mm}(r)$ $ p_{ll}^{ms}(r) p_{ln}^{sm}(r) f_{R_m}^{\mathrm{L}}(r)$ where
\begin{enumerate}
    \item   The user is associated with the LoS MBS and the interference is from NLoS MBS.
            \begin{align}\label{LMBSNLMBS}
             p_{ln}^{mm}(r)=P[P_m^{tr}B_{m}h_mA_Lr^{-\alpha_L}\geq P_m^{tr}B_{m}h_mA_{NL}r_m^{-\alpha_{NL}}] =e^{-\lambda_m\pi\left[\left(\frac{A_\mathrm{L}}{A_\mathrm{NL}}\right)^{\frac{-1}{\alpha_{NL}}} r^{\frac{\alpha_{L}}{\alpha_{NL}}}\right]^{2}}
            \end{align}
    \item   The user is associated with the LoS MBS and the interference is from LoS SBS.
    \begin{align}\label{LSBSLMBS}
             p_{ll}^{ms}(r) =p[P_m^{tr}B_{m}h_mA_Lr^{-\alpha_L}\geq P_s^{tr}B_{s}h_sA_{L}r_s^{-\alpha_{L}}]=e^{-\lambda_s\pi\left[\left(\frac{P_m^{tr}B_{m} h_m}{P_s^{tr} B_{s}h_s}\right)^{\frac{-1}{\alpha_{L}}} r\right]^{2}}
      \end{align}
    \item   The user is associated with the LoS MBS and the interference is from NLoS SBS.
      \begin{align}\label{LMBSNLSBS}
             p_{ln}^{sm}(r) =P[P_m^{tr}B_{m}h_mA_Lr^{-\alpha_L}\geq P_s^{tr}B_{s}h_sA_{NL}r_s^{-\alpha_{NL}}]=e^{-\lambda_s\pi\left[\left(\frac{P_m^{tr}B_{m} h_m A_\mathrm{L}}{P_s^{tr} B_{s}h_s A_\mathrm{NL}}\right)^{\frac{-1}{\alpha_{\mathrm{NL}}}} r^{\frac{\alpha_{\mathrm{L}}}{\alpha_{\mathrm{NL}}}}\right]^{2}}
      \end{align}
\end{enumerate}

Then,  the probability that the user obtain the desired NLoS signal from the MBS is $F_{m}^{NL}(r)=p_{nl}^{mm}(r)$ $ p_{nl}^{ms}(r) p_{nn}^{ms}(r) f_{R_m}^{\mathrm{NL}}(r)$ where

\begin{enumerate}
    \item   The user is associated with the NLoS MBS and the interference is from LoS MBS.
            \begin{align}\label{LMBSNLMBS}
             p_{nl}^{mm}(r)=P[P_m^{tr}B_{m}h_mA_{NL}r^{-\alpha_{NL}}\geq P_m^{tr}B_{m}h_mA_Lr_m^{-\alpha_L}]
             =e^{-\lambda_m\pi\left[\left(\frac{A_\mathrm{NL}}{A_\mathrm{L}}\right)^{\frac{-1}{\alpha_{L}}} r^{\frac{\alpha_{\mathrm{NL}}}{\alpha_{\mathrm{L}}}}\right]^{2}}
            \end{align}
    \item   The user is associated with the NLoS MBS and the interference is from LoS SBS.
    \begin{align}\label{LSBSLMBS}
             p_{nl}^{ms}(r)=P[P_m^{tr}B_{m}h_mA_{NL}r^{-\alpha_{NL}}\geq P_s^{tr}B_{s}h_sA_Lr_s^{-\alpha_L}]
             =e^{-\lambda_s\pi\left[\left(\frac{P_m^{tr} B_{m}h_mA_{NL}}{P_s^{tr} B_{s} h_sA_{L}}\right)^{\frac{-1}{\alpha_{L}}} r^{\frac{ \alpha_\mathrm{NL}}{\alpha_{L}}}\right]^{2}}
      \end{align}
    \item   The user is associated with the NLoS MBS and the interference is from NLoS SBS.
      \begin{align}\label{LMBSNLSBS}
             p_{nn}^{ms}(r)=P[P_m^{tr}B_{m}h_mA_{NL}r^{-\alpha_{NL}}\geq P_s^{tr}B_{s}h_sA_{NL}r_s^{-\alpha_{NL}}]
             =e^{-\lambda_s\pi\left[\left(\frac{P_m^{tr} B_{m}h_m}{P_s^{tr}B_{s} h_s}\right)^{\frac{-1}{\alpha_{\mathrm{NL}}}} r \right]^{2}}
      \end{align}
\end{enumerate}

The probability that   SBS obtains the desired LoS signal from   MBS is $\small F_{bh}^{L}(r)=p_{ln}^{bh}(r) f_{R_{bh}}^{\mathrm{L}}(r)$ where $p_{ln}^{bh}(r)=P[P_m^{tr}B_{m}h_mA_Lr^{-\alpha_L}\geq P_m^{tr}B_{m}h_mA_{NL}r_{bh}^{-\alpha_{NL}}] =e^{-\lambda_m\pi\left[\left(\frac{A_\mathrm{L}}{A_\mathrm{NL}}\right)^{\frac{-1}{\alpha_{NL}}} r^{\frac{\alpha_{L}}{\alpha_{NL}}}\right]^{2}}$ is the probability that the SBS is associated with the LoS MBS and the interference is from NLoS MBS.

The probability that   SBS obtains the desired NLoS signal from   MBS is $F_{bh}^{NL}(r)=p_{nl}^{bh}(r) f_{R_bh}^{\mathrm{NL}}(r)$ where $\footnotesize
             p_{nl}^{bh}(r)=P[P_m^{tr}B_{m}h_mA_{NL}r^{-\alpha_{NL}}\geq P_m^{tr}B_{m}h_mA_Lr_{bh}^{-\alpha_L}]
             =e^{-\lambda_m\pi\left[\left(\frac{A_\mathrm{NL}}{A_\mathrm{L}}\right)^{\frac{-1}{\alpha_{L}}} r^{\frac{\alpha_{\mathrm{NL}}}{\alpha_{\mathrm{L}}}}\right]^{2}}
            $ is the probability that the SBS is associated with the NLoS MBS and the interference is from LoS MBS.

\subsection{Proof of Proposition \ref{ProUserSINRCoverageProbability}}\label{Appen3:ProUserSINRCoverageProbability}

Then we first focus on the SINR distribution of  a user   covered by SBS:
\begin{align}
  P_{s}^{cov}(\gamma)=P_{s,L}^{cov}(\gamma)+P_{s,NL}^{cov}(\gamma)
  \end{align}
where the SINR distribution of  a user   covered by LoS  SBS:
\begin{align}
  &P_{s,L}^{cov}(\gamma)=\mathbb{E}_{r }\left[\mathbb{P}\left[\mathrm{SINR}_{s}^{\mathrm{L}} (r ) \geq \gamma\right]\right]=\int_{0}^{\infty} \mathbb{P}\left[\operatorname{SINR}_{s}^{\mathrm{L}}(r)>\gamma\right] F_{s}^{\mathrm{L}}(r)\mathrm{d}r
\end{align}
and the SINR distribution of  a user   covered by NLoS  SBS:
\begin{align}
  &P_{s,NL}^{cov}(\gamma)=\mathbb{E}_{r }\left[\mathbb{P}\left[\mathrm{SINR}_{s}^{\mathrm{NL}} (r ) \geq \gamma\right]\right]=\int_{0}^{\infty} \mathbb{P}\left[\operatorname{SINR}_{s}^{\mathrm{NL}}(r)>\gamma\right] F_{s}^{\mathrm{NL}}(r)\mathrm{d}r
\end{align}
where $\gamma$ is the threshold for successful demodulation and decoding at the receiver. $\mathbb{P}\left[\mathrm{SINR}_{s}^{\mathrm{L}} (r ) \geq \gamma\right]$ means the probability of the event that the SINR of  the user covered by SBS is over $\gamma$  via the LoS path at distance $r$:
\begin{align}
&\mathbb{P}\left[\operatorname{SINR}_{s}^{\mathrm{L}}(r)  \geq \gamma\right]=\mathbb{P}\left[\frac{P_{s}^{tr}B_{s} A_{\mathrm{L}} r^{-\alpha_{\mathrm{L}}}}{I_s+I_m+N_{0}} \geq \gamma\right]\nonumber\\
&=\mathbb{P}\left[h_{m } \geq \frac{\gamma\left(I_s+I_m+N_{0}\right)}{P_{s}^{tr} B_{s}A_{\mathrm{L}} r^{-\alpha_{\mathrm{L}}} }\right]\stackrel{(a)}{=} \exp \left(\frac{-\gamma N_{0}}{P_{s}^{tr} B_{s}A_{\mathrm{L}} r^{-\alpha_{\mathrm{L}}} }\right)  \mathcal{L}_{I_{s,m}}^{\mathrm{L}}\left(\gamma r^{\alpha_{\mathrm{L}}} \right)
\end{align}
Besides, $\mathbb{P}\left[\mathrm{SINR}_{s}^{\mathrm{NL}} (r ) \geq \gamma\right]$ means the probability of the event that the SINR of  the user covered by SBS is over $\gamma$  via the NLoS path at distance $r$:
\begin{align}
&\mathbb{P}\left[\operatorname{SINR}_{s}^{\mathrm{NL}}(r)  \geq \gamma\right]=\mathbb{P}\left[\frac{P_{s}^{tr} B_{s} G_{s} A_{\mathrm{L}} r^{-\alpha_{\mathrm{NL}}}}{I_s+I_m+N_{0}} \geq \gamma\right]\nonumber\\
&=\mathbb{P}\left[h_{m0} \geq \frac{\gamma\left(I_s+I_m+N_{0}\right)}{P_{s}^{tr} B_{s}A_{\mathrm{NL}} r^{-\alpha_{\mathrm{NL}}} }\right]\stackrel{(a)}{=} \exp \left(\frac{-\gamma N_{0}}{P_{s}^{tr} B_{s} h_{s}A_{\mathrm{NL}} r^{-\alpha_{\mathrm{NL}}} }\right)  \mathcal{L}_{I_{s,m}}^{\mathrm{NL}}\left(\gamma r^{\alpha_{\mathrm{NL}}} \right)
\end{align}
where (a) follows from small fading $h$$\sim$$\exp(1)$. Here the Rayleigh fading is considered.    $\mathcal{L}_{I_{s,m}}$  is the Laplace transform of the cumulative interference from  the SBS tier.
\begin{small}
\begin{align}\label{LoSInterferenceFromtheSBS}
   &\mathcal{L}_{I_{s,m}}^{\mathrm{L}}\left( \gamma r^{\alpha_{\mathrm{L}}} \right)\\
   &\stackrel{(b)}{=}
   \exp \left(-2 \pi  \lambda_{s}\left(\int_{r}^{\infty}  \frac{\mathcal{P}_L(u)u}{1+\frac{P_s^{tr}B_{s}h_sA_Lr^{-\alpha_{\mathrm{L}}}}{\gamma P_s^{tr}B_{s}h_sA_Lu^{-\alpha_{\mathrm{L}}}}} d u+\int_{\left(\frac{A^{\mathrm{L}}}{A^{\mathrm{NL}}}\right)^{\frac{-1}{\alpha^{\mathrm{NL}}}} r^{\frac{\alpha_{\mathrm{L}}}{\alpha_{\mathrm{NL}}}}}^{\infty}  \frac{\mathcal{P}_{\mathrm{NL}}(u)u}{1+\frac{  P_s^{tr}B_{s}h_sA_{\mathrm{L}}r^{-\alpha_{\mathrm{L}}}}{\gamma  P_s^{tr}B_{s}h_s A_{\mathrm{NL}}u^{-\alpha_{\mathrm{NL}}}}} d u\right)\right)\nonumber\\
   &\times \exp \left(-2 \pi  \lambda_{m}\left(\int_{\left(d_1\right)^{\frac{-1}{\alpha_{L}}} r}^{\infty} \frac{\mathcal{P}_{L}(u)u}{1+\frac{P_{s}^{tr}B_{s}h_sA_L r^{-\alpha_{\mathrm{L}}}}{\gamma P_{m}^{tr} B_{m}h_m A_L u^{-\alpha_{\mathrm{L}}}}} d u+\int_{\left(d_2\right)^{\frac{-1}{\alpha_{\mathrm{NL}}}} r^\frac{\alpha_{\mathrm{L}}}{\alpha_{\mathrm{NL}}}}^{\infty}  \frac{\mathcal{P}_{\mathrm{NL}}(u)u}{1+\frac{P_{s}^{tr} B_{s}h_sA_L r^{-\alpha_{\mathrm{L}}}}{\gamma P_{m}^{tr}B_{m}h_mA_{L} u^{-\alpha_{\mathrm{L}}}}} d u\right)\right)\nonumber
\end{align}
\end{small}
where step (b) is based on \cite{ReferUserClosestBSProbabilty}. $d_1=\frac{P_s^{tr}B_{s} h_s}{P_m^{tr}B_{m} h_m}$ and $d_2=\frac{P_s^{tr} B_{s}h_s A_\mathrm{L}}{P_m^{tr}B_{m} h_m A_\mathrm{NL}}$
Following the same logic,
$  \mathcal{L}_{I_{s,m}}^{\mathrm{NL}}\left(\gamma r^{\alpha_{\mathrm{NL}}} \right),$
$  \mathcal{L}_{I'_{s,m}}^{\mathrm{L}}\left(\gamma r^{\alpha_{\mathrm{L}}} \right),$
$  \mathcal{L}_{I'_{s,m}}^{\mathrm{NL}}\left(\gamma r^{\alpha_{\mathrm{NL}}} \right),$
$  \mathcal{L}_{I_{bh}}^{\mathrm{L}}\left(\gamma r^{\alpha_{\mathrm{L}}} \right),$
$  \mathcal{L}_{I_{bh}}^{\mathrm{NL}}\left(\gamma r^{\alpha_{\mathrm{NL}}} \right)$
can be also obtained.

In the next, we  focus on the SINR distribution of  a user   covered by  MBS :
\begin{align}
  P_{m}^{cov}
  &=P_{m,L}^{cov}(\gamma)+P_{m,NL}^{cov}(\gamma)
  =\mathbb{E}_{r }\left[\mathbb{P}\left[\mathrm{SINR}_{m}^{\mathrm{L}} (r ) \geq \gamma\right]\right]+\mathbb{E}_{r }\left[\mathbb{P}\left[\mathrm{SINR}_{m}^{\mathrm{NL}} (r ) \geq \gamma\right]\right]\nonumber\\
  &=\int_{0}^{\infty} \mathbb{P}\left[\operatorname{SINR}_{m}^{\mathrm{L}}(r)>\gamma\right] F_{m}^{\mathrm{L}}(r) \mathrm{d} r
  +\int_{0}^{\infty}\mathbb{P}\left[\operatorname{SINR}_{m}^{\mathrm{NL}}(r)>\gamma\right] F_{m}^{\mathrm{NL}}(r) \mathrm{d} r\nonumber
\end{align}
where
$
\mathbb{P}\left[\operatorname{SINR}_{m}^{\mathrm{L}}(r)  \geq \gamma\right] = \exp \left(\frac{-\gamma N_{0}}{P_{m}^{tr} B_{m}A_{\mathrm{L}} r^{-\alpha_{\mathrm{L}}} }\right)  \mathcal{L}_{I_{s,m}^{'}}^{\mathrm{L}}\left(\gamma r^{\alpha_{\mathrm{L}}} \right)
$
and
$
\mathbb{P}\left[\operatorname{SINR}_{m}^{\mathrm{NL}}(r)  \geq \gamma\right]\\=  \exp \left(\frac{-\gamma N_{0}}{P_{m}^{tr} B_{m} g_{m}A_{\mathrm{NL}} r^{-\alpha_{\mathrm{NL}}} }\right)  \mathcal{L}_{I_{s,m}^{'}}^{\mathrm{NL}}\left(\gamma r^{\alpha_{\mathrm{NL}}} \right)
$

Considering SBS is also covered by MBS via the wireless backhaul link, we  focus on the SINR distribution of  a SBS is covered by MBS:
\begin{align}
  P_{bh}^{cov}(\gamma)
  &=P_{bh,L}^{cov}(\gamma)+P_{bh,NL}^{cov}(\gamma)\\
  &=\mathbb{E}_{r }\left[\mathbb{P}\left[\mathrm{SINR}_{bh}^{\mathrm{L}} (r ) \geq \gamma\right]\right]+\mathbb{E}_{r }\left[\mathbb{P}\left[\mathrm{SINR}_{bh}^{\mathrm{NL}} (r ) \geq \gamma\right]\right]\nonumber\\
  &=\int_{0}^{\infty} \mathbb{P}\left[\operatorname{SINR}_{bh}^{\mathrm{L}}(r)>\gamma\right] F_{bh}^{\mathrm{L}}(r)\mathrm{d} r+\int_{0}^{\infty}\mathbb{P}\left[\operatorname{SINR}_{bh}^{\mathrm{NL}}(r)>\gamma\right] F_{bh}^{\mathrm{NL}}(r) \mathrm{d} r\nonumber
\end{align}
where
$
\mathbb{P}\left[\operatorname{SINR}_{bh}^{\mathrm{L}}(r)  \geq \gamma\right] = \exp \left(\frac{-\gamma N_{0}}{P_{m}^{tr} B_{m} g_{m}A_{\mathrm{L}} r^{-\alpha_{\mathrm{L}}} }\right)  \mathcal{L}_{I_{bh}}^{\mathrm{L}}\left(\gamma r^{\alpha_{\mathrm{L}}} \right)
$
and
$
\mathbb{P}\left[\operatorname{SINR}_{bh}^{\mathrm{NL}}(r)  \geq \gamma\right]\\=  \exp \left(\frac{-\gamma N_{0}}{P_{m}^{tr} B_{m} g_{m}A_{\mathrm{NL}} r^{-\alpha_{\mathrm{NL}}} }\right)  \mathcal{L}_{I_{bh}}^{\mathrm{NL}}\left(\gamma r^{\alpha_{\mathrm{NL}}} \right)
$

\subsection{Proof of Proposition \ref{ProRateCoverageDistribution}} \label{Appen4:ProRateCoverageDistribution}
For the user associating with SBS tier,  the user spectral efficiency   distribution is $\mathbb{P}\left[R_s>\rho\right]$
 where $R_s$ denotes the spectral efficiency of a typical user associating a serving SBS and $\rho$ is the spectral efficiency  requirement. When a user associated with the SBS is requesting  files, the cached files will be delivered by the SBS directly and the uncached files  will be delivered to the user through the wireless backhaul link and wireless access link. Therefore, the   spectral efficiency of the user associated with serving SBS is not only  limited by the wireless link capacity and backhaul link capacity, but also is related with the cache hit ratio $p_h$. Namely, $R_s=\min \{ \eta\log_2(1+\mathrm{SINR}_{s}(r)),\frac{ 1-\eta }{1-p_h}\log_2(1+\mathrm{SINR}_{bh}(r))\}$.
 However, since the transmission of the wireless access link and the wireless backhaul link  is either LoS or NloS, the below four cases of $\mathbb{P}\left[R_s>\rho \right]$ are analyzed:
\begin{itemize}
  \item  Files are delivered by both  LoS based wireless access link and the wireless backhaul link.
        \begin{align}
        &\mathbb{P}\left[R_{s,1}>\rho \right] \\
        &=\mathbb{P}\left[\min \{ \eta\log_2(1+\mathrm{SINR}_{s}^{\mathrm{L}}(r)),\frac{ 1-\eta }{ 1-p_h }\log_2(1+\mathrm{SINR}_{bh}^{\mathrm{L}}(r))\}\geq \rho\right]\nonumber\\
        &=\mathbb{P}\left[\eta  \log _{2}(1+\operatorname{SINR}_{s}^{\mathrm{L}} (r))>\rho\right]\times\mathbb{P}\left[(1-\eta) \log _{2}(1+\operatorname{SINR}_{bh}^{\mathrm{L}} (r))>(1-p_h)\rho\right]\nonumber\\
        &=\mathbb{P}\left[\mathrm{SINR}_{s}^{\mathrm{L}}(r)\geq 2^{\frac{\rho}{\eta }}-1\right]\times\mathbb{P}\left[\mathrm{SINR}_{bh}^{\mathrm{L}}(r)\geq 2^{\frac{(1-p_h)\rho}{(1-\eta) }}-1\right]\nonumber\\
        &=  \mathbb{P}_{s,L}^{cov}(2^{\frac{\rho}{\eta}}-1|r_{s} ) \times  \mathbb{P}_{bh,L}^{cov}(2^{\frac{(1-p_h)\rho}{(1-\eta) }}-1|r_{bh} )\nonumber
    \end{align}
  \item Files are delivered by    LoS  based  wireless access link and  NLoS based wireless backhaul.
        \begin{align}
        &\mathbb{P}\left[R_{s,2}>\rho \right]=\mathbb{P}_{s,\mathrm{L}}^{cov}(2^{\frac{\rho}{\eta}}-1|r_{s} ) \times  \mathbb{P}_{bh,\mathrm{NL}}^{cov}(2^{\frac{(1-p_h)\rho}{(1-\eta) }}-1|r_{bh} )
        \end{align}
  \item Files are delivered by    NLoS  based  wireless access link and  LoS based wireless backhaul.
        \begin{align}
        &\mathbb{P}\left[R_{s,3}>\rho \right]=\mathbb{P}_{s,\mathrm{NL}}^{cov}(2^{\frac{\rho}{\eta}}-1|r_{s} ) \times  \mathbb{P}_{bh,\mathrm{L}}^{cov}(2^{\frac{(1-p_h)\rho}{(1-\eta) }}-1|r_{bh} )
    \end{align}
  \item Files are delivered by    NLoS  based  wireless access link and  NLoS based wireless backhaul.
        \begin{align}
        \mathbb{P}\left[R_{s,4}>\rho \right]=\mathbb{P}_{s,\mathrm{NL}}^{cov}(2^{\frac{\rho}{\eta}}-1|r_{s} ) \times  \mathbb{P}_{bh,\mathrm{NL}}^{cov}(2^{\frac{(1-p_h)\rho}{(1-\eta) }}-1|r_{bh} )
    \end{align}
\end{itemize}
where $\mathbb{P}_{s,L}^{cov}(\cdot)$ and $\mathbb{P}_{s,NL}^{cov}(\cdot)$ are the SINR distributions of the user covered by LoS SBS and NLoS SBS, respectively.  $\mathbb{P}_{m,L}^{cov}(\cdot)$ and $\mathbb{P}_{m,NL}^{cov}(\cdot)$  are the SINR distributions of SBS covered by LoS MBS and NLoS MBS, respectively  (in Proposition \ref{ProUserSINRCoverageProbability}).

Following the same logic,  for the user associated with the MBS, the spectral efficiency distribution  is divided into two cases:
\begin{itemize}
  \item  The access link between the user and the MBS is LoS:
    \begin{align}
        \mathbb{P}\left[R_{m,1}>\rho \right]
        =\mathbb{P}\left[\eta  \log _{2}(1+\operatorname{SINR}_{m}^{\mathrm{L}} (r))>\rho\right]
        =\mathbb{P}\left[\mathrm{SINR}_{m}^{\mathrm{L}}(r)\geq 2^{\frac{\rho}{\eta }}-1\right]
        =  \mathbb{P}_{m,L}^{cov}(2^{\frac{\rho}{\eta}}-1|r_{m} )
    \end{align}
  \item  The access link between the user and the MBS is NLoS:
 \begin{align}
        \mathbb{P}\left[R_{m,2}>\rho \right]
        =\mathbb{P}\left[\mathrm{SINR}_{m}^{\mathrm{NL}}(r)\geq 2^{\frac{\rho}{\eta }}-1\right]
        =  \mathbb{P}_{m,\mathrm{NL}}^{cov}(2^{\frac{\rho}{\eta}}-1|r_{m} )
 \end{align}
\end{itemize}

where $\mathbb{P}_{s,L}^{cov}(\cdot)$ and $\mathbb{P}_{s,NL}^{cov}(\cdot)$ are the SINR distributions of the user covered by LoS MBS  and NLoS MBS, respectively (in Proposition \ref{ProUserSINRCoverageProbability}).

\subsection{The proof of Proposition \ref{InterferenceASE}}\label{Appen5:Inteference}
when the density of SBS is higher (i.e., $\lambda_s$$\rightarrow$$\infty$), the SBS will become more closer to the user. All the transmission signal from SBS will  be transmitted in an LoS channel to the user (i.e., $\mathcal{P}_{L}(r)=1$). That means the NLoS transmission of SBS is neglected. Besides, the  original  $\mathcal{\overline{L}}_{I_{s,m}}^{\mathrm{L}}$ in the general ASE of SBS (\ref{SBSAreaSpectralEfficiency}) will be approximated as:
\begin{small}
\begin{align}\label{INTLoSInterferenceFromtheSBS}
   \mathcal{\overline{L}}_{I_{s,m}}^{\mathrm{L,int}}\left( \gamma r^{\alpha_{\mathrm{L}}} \right)\thickapprox\exp \left(-2 \pi  \lambda_{s} \int_{r}^{\infty}  \frac{u}{1+\frac{r^{-\alpha_{\mathrm{L}}}}{\gamma u^{-\alpha_{\mathrm{L}}}}} d u
   -2 \pi  \lambda_{m} \int_{\left(\frac{P_{m}^{tr} }{P_{s}^{tr} }\right)^{\frac{1}{\alpha^{\mathrm{L}}}} r }^{\infty} \frac{u}{1+\frac{P_{s}^{tr}B_{s}h_sA_L r^{-\alpha_{\mathrm{L}}}}{\gamma P_{m}^{tr}B_{m}h_m A_L u^{-\alpha_{\mathrm{L}}}}} d u \right)
\end{align}
\end{small}
And the  original  $\mathcal{\overline{L}}_{I_{s,m}}^{\mathrm{NL}}$ in the general ASE of SBS (\ref{SBSAreaSpectralEfficiency}) will be approximated to zero in the   interference-limited case.

With the above analysis, based on the general ASE of SBS in (\ref{SBSAreaSpectralEfficiency}), the ASE of SBS in the interference-limited case is obtained and  given in the Proposition \ref{InterferenceASE}.


\end{document}